\newtheorem{theorem}{Theorem}[section]
\newtheorem{definition}[theorem]{Definition}
\newtheorem{lemma}[theorem]{Lemma}
\newtheorem{corollary}[theorem]{Corollary}
\newtheorem{remark}[theorem]{Remark}
\newcommand{\dd}{\mathrm{d}}
\definecolor{charlesBlue}{RGB}{100, 155, 255}
\definecolor{lightgray}{RGB}{200, 200, 200}
\newcommand{\nR}{\widehat{R}_{\nu}}
\newcommand{\nB}{\widehat{B}_\nu}
\newcommand{\nW}{\widehat{W}_\nu}
\newcommand{\B}{B_\nu}
\newcommand{\W}{W_\nu}
\begin{document}

%% *** Frontmatter *** 

\begin{frontmatter}
\title{Nested $\widehat R$: Assessing the convergence of Markov chain Monte Carlo when running many short chains}
%\title{\support{}}
\runtitle{Nested $\widehat R$}

\begin{aug}
\author[A]{\fnms{Charles C.}~\snm{Margossian}\ead[label=e1]{cmargossian@flatironinstitute.org}},
\author[B]{\fnms{Matthew D.}~\snm{Hoffman}},
\author[B]{\fnms{Pavel}~\snm{Sountsov}},
\author[C]{\fnms{Lionel}~\snm{Riou-Durand}},
\author[D]{\fnms{Aki}~\snm{Vehtari}},
\author[E]{\fnms{Andrew}~\snm{Gelman}}

% \author{\fnms{Charles C.} \snm{Margossian}\thanksref{addr1}\ead[label=e1]{}}
% \
% \author{\fnms{Matthew D.} \snm{Hoffman}\thanksref{addr2}}
% \
% \author{\fnms{Pavel} \snm{Sountsov}\thanksref{addr2}}
% \\
% \author{\fnms{Lionel} \snm{Riou-Durand}\thanksref{addr3}}
% \
% \author{\fnms{Aki} \snm{Vehtari}\thanksref{addr4}}
% \
% \author{\fnms{Andrew} \snm{Gelman}\thanksref{addr5}}

%\author{\fnms{<firstname>} \snm{<surname>}\thanksref{}\ead[label=e1]{}}
%\and
%\author{\fnms{} \snm{}}

\runauthor{Margossian et al}

\address[A]{Center for Computational Mathematics, Flatiron Institute\printead[presep={,\ }]{e1}}
\address[B]{Google Research}
\address[C]{Laboratoire de Math\'ematiques de l'INSA Rouen Normandie}
\address[D]{Department of Computer Science, Aalto University}
\address[E]{Department of Statistics and Political Science, Columbia University}

%\thankstext{<id>}{<text>}

\end{aug}

\begin{abstract}
  Recent developments in parallel Markov chain Monte Carlo (MCMC) algorithms allow us to run thousands of chains almost as quickly as a single chain, using hardware accelerators such as GPUs.
  While each chain still needs to forget its initial point during a warmup phase, the subsequent sampling phase can be shorter than in classical settings, where we run only a few chains.
  To determine if the resulting short chains are reliable, we need to assess how close the Markov chains are to their stationary distribution after warmup.
  The potential scale reduction factor $\widehat R$ is a popular convergence diagnostic but unfortunately can require a long sampling phase to work well.
  We present a nested design to overcome this challenge and a generalization called \textit{nested} $\widehat R$.
  This new diagnostic works under conditions similar to $\widehat R$ and completes the workflow for GPU-friendly samplers.
  In addition, the proposed nesting provides theoretical insights into the utility of $\widehat R$, in both classical and short-chains regimes.

\end{abstract}

\begin{keyword}[class=MSC]
\kwd[Primary ]{62-08}
\kwd{Computational methods for problems pertaining to statistics}
\end{keyword}

\begin{keyword}
\kwd{Markov chain Monte Carlo}
\kwd{parallel computation}
\kwd{convergence diagnostics}
\kwd{Bayesian inference}
\kwd{$\widehat R$ statistic}
\end{keyword}

\end{frontmatter}

\section{Introduction}

Over the past decade, much progress in computational power has come from special-purpose single-instruction multiple-data processors such as GPUs.
This has motivated the development of GPU-friendly Markov chain Monte Carlo (MCMC) algorithms designed to efficiently run many chains in parallel \citep[e.g.,][]{Lao:2020, Hoffman:2021, Sountsov:2021, Hoffman:2022, Riou-Durand:2022b}.  % add reference about MC-HMC
These methods often address shortcomings in pre-existing samplers designed with CPUs in mind:
for example, ChEES-HMC \citep{Hoffman:2021} is a GPU-friendly alternative to the popular but control-flow-heavy no-U-turn sampler (NUTS) \citep{Hoffman:2014}.
With these novel samplers, we can sometimes run thousands of chains almost as quickly as a single chain on modern hardware \citep{Lao:2020}.

In practice, MCMC operates in two phases: a warmup phase that reduces the bias of the Monte Carlo estimators and a sampling phase during which the variance decreases with the number of samples collected.
There are two ways to increase the number of samples: run a longer sampling phase or run more chains.
Practitioners often prefer running a longer sampling phase because each chain needs to be warmed up and so the total number of warmup operations increases linearly with the number of chains.
However, with GPU-friendly samplers, it is possible to efficiently run many chains in parallel.
As a result, the higher computational cost for warmup only marginally increases the algorithm's runtime \citep{Lao:2020}.
It is then possible to trade the length of the sampling phase for the number of chains \citep{Rosenthal:2000}.
When running hundreds or thousands of chains, we can rely on a much shorter sampling phase than when running only 4 or 8 chains.
This defines the \textit{many-short-chains} regime of MCMC.

The length of the warmup phase is a crucial control parameter of MCMC. 
If the warmup is too short, the chains will not be close enough to their stationary distribution and the first iterations of the sampling phase will have an unacceptable bias.
On the other hand, if the warmup is too long, we waste precious computation time.
Both concerns are exacerbated in the many-short-chains regime.
A large bias at the beginning of the sampling phase implies that the entire (short) chain carries a large bias, but running a longer warmup comes at a relatively high cost, since the warmup phase dominates the computation.

To check if the warmup phase is sufficiently long, practitioners often rely on convergence diagnostics \citep{Cowles:1996, Robert:2004, Gelman:2011, Gelman:2013}.
Here, several notions of convergence for MCMC may be considered.
When studying the warmup length, the emphasis is often on the total variation distance $D_\text{TV}$ between the distribution of the first sample obtained after warmup and the target distribution $p$.
Colloquially, has the Markov chain sufficiently approached its stationary distribution during warmup?
We may also consider the bias of the Monte Carlo estimator, for any quantity of interest, and check that this bias is small, even negligible, before we start sampling.
Convergence in $D_\text{TV}$ relates to convergence in bias, though the two notions are not equivalent; see \citet[Proposition 3]{Roberts:2004}.
Unfortunately, neither $D_\text{TV}$ nor the bias can be measured, and so these quantities must be monitored by indirect means.

In the multiple-chains setting, the most popular convergence diagnostic may well be the potential scale reduction factor $\widehat R$ \citep{Gelman:1992, Brooks:1998, Vehtari:2021}.
% $\widehat R$ is general-purpose, in the sense that it does not rely on any details of the transition kernel, and serves as the default convergence diagnostic in several probabilistic programming languages, including Stan \citep{Carpenter:2017}, PyMC \citep{Salvatier:2016}, and TensorFlow Probability \citep{tfp:2023}, among others.
%
The driving idea behind $\widehat R$ is to compare multiple independent chains initialized from an overdispersed distribution and check that, despite the different initialization, each Markov chain still produces Monte Carlo estimators in close agreement.
In other words, we check how well the Markov chain ``forgets'' its starting point with the understanding that, once the influence of the initialization vanishes, the chain must have reached its stationary distribution and the bias decayed to 0.

In this paper, we formally describe ``forgetfulness'' as the decay of a \textit{nonstationary variance} (to be defined), show how to monitor it, and demonstrate how the nonstationary variance relates to bias decay.
Furthermore, we show that, in the many-short-chains regime, $\widehat R$ may do a poor job monitoring the nonstationary variance and we propose a generalization of $\widehat R$, called \textit{nested} $\widehat R$, to adress this shortcoming.

\subsection{A motivating problem} \label{sec:motivation}

\begin{figure}
    \centerline{
    \includegraphics[width = 2.5in]{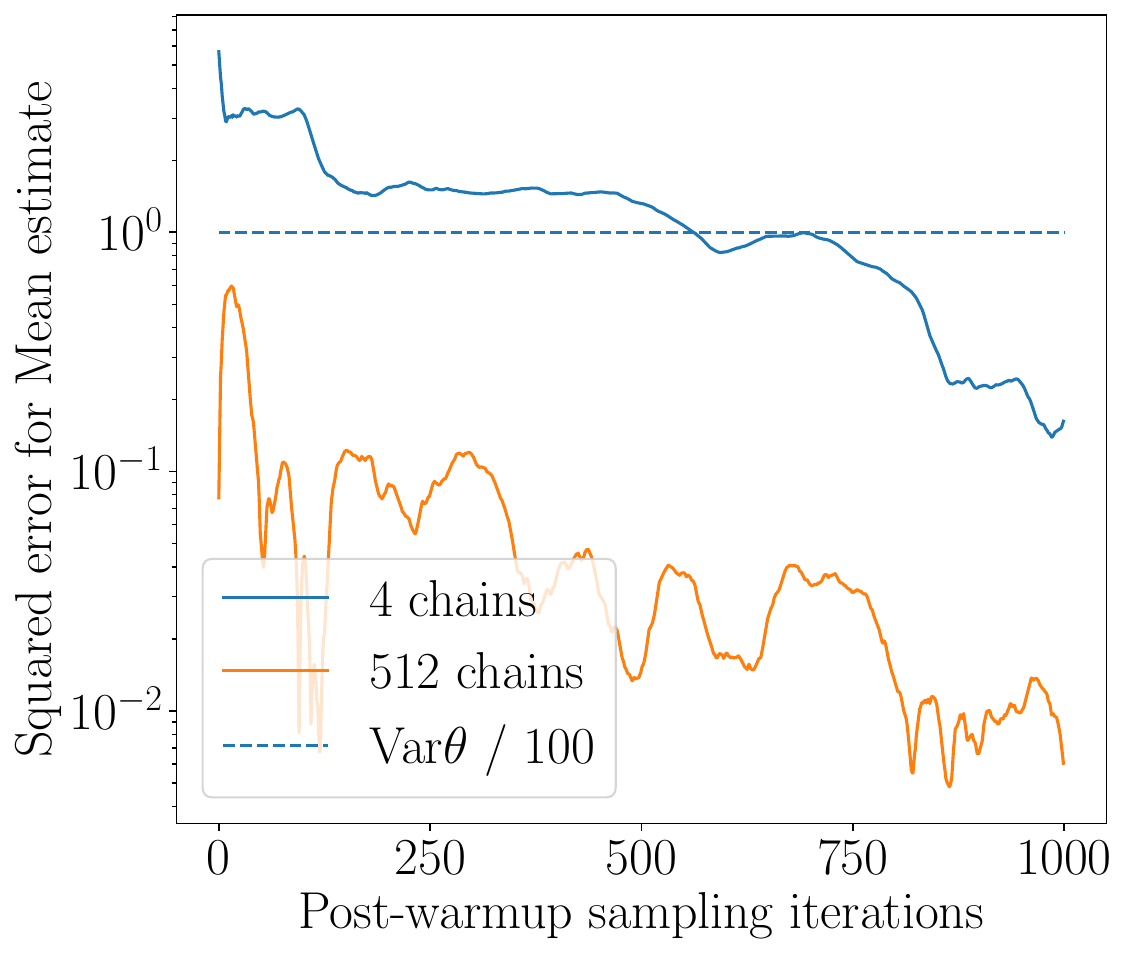}
    \includegraphics[width = 2.6in]{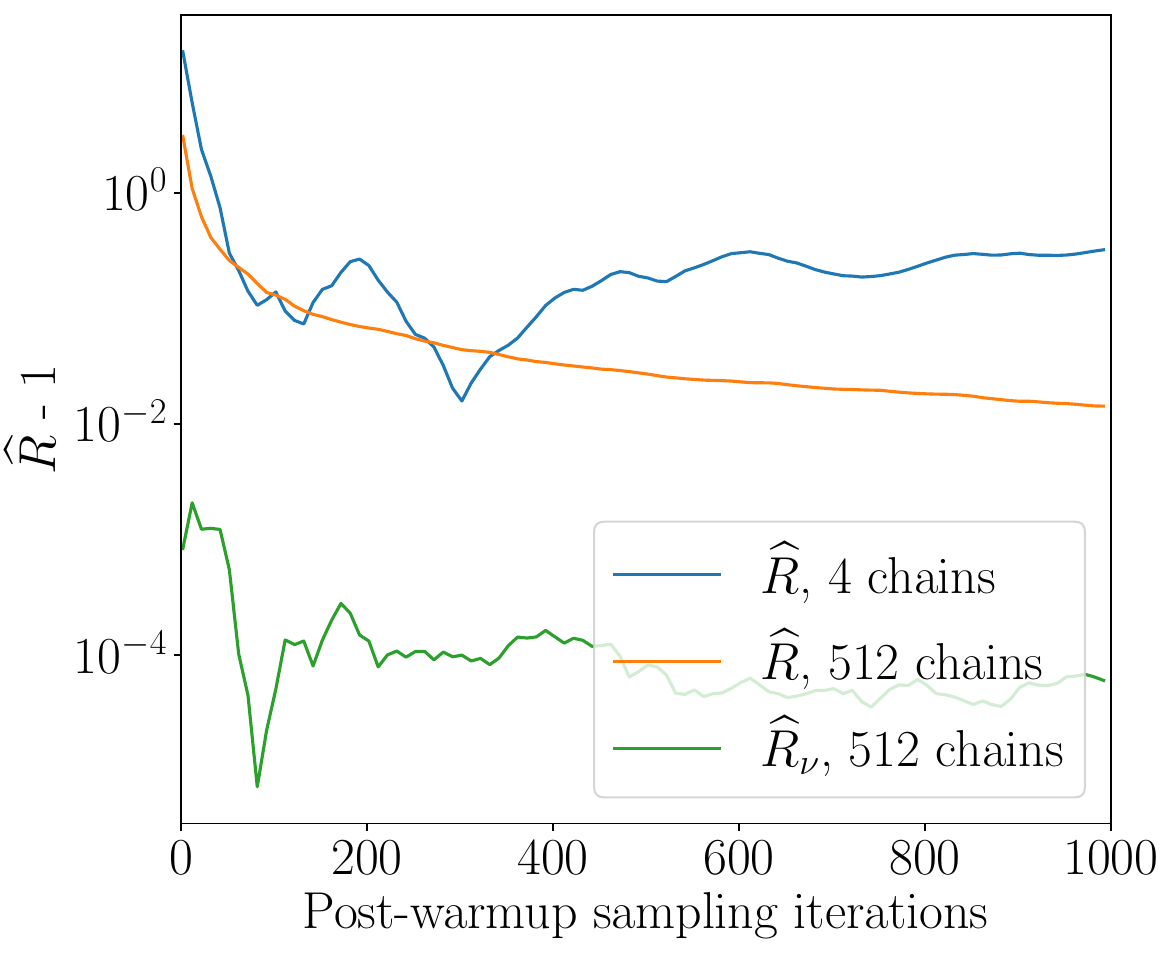}}
    \caption{\em Monte Carlo error when estimating $\mathbb E (\theta_1)$ in the Rosenbrock distribution. (left) Squared error after warmup. When using many chains, a single iteration per chain suffices to achieve a target ESS of 100.
    (right) $\widehat R$ computed with 4 or 512 chains. Many iterations are required to diagnose convergence (i.e. $\widehat R \le 1.01$), even when running 512 chains. In this paper, we introduce nested $\widehat R$, denoted $\nR$---here, evaluated with 4 clusters of 128 subchains---as an alternative to the traditional $\widehat R$ to diagnose convergence in the many-short-chains regime.
    }
    \label{fig:error2_banana}
\end{figure}

We first examine the behavior of $\widehat R$ in the classic regime and the many-short-chains regime of MCMC.
Consider the Rosenbrock distribution $p(\theta_1, \theta_2)$,
\begin{equation} \label{eq:rosenbrock}
    \theta_1 \sim \text{normal}(0, 1) \ ; \ \
    \theta_2 \mid \theta_1 \sim \text{normal}(0.03(\theta_1^2 - 100), 1),
\end{equation}
and suppose we wish to estimate $\mathbb E_p (\theta_1)$, and achieve a squared error below $\text{Var}_p (\theta_1) / 100$.
This corresponds to the expected squared error attained with 100 independent samples from $p(\theta_1, \theta_2)$.
We run ChEES-HMC \citep{Hoffman:2021} on a T4 GPU (available for free on Google Colaboratory\footnote{\url{https://colab.google/}}), first using 4 chains and then using 512 chains.
We discard the first 100 iterations as part of the warmup phase and then run 1000 sampling iterations.
On a GPU, running ChEES-HMC with 512 chains takes $\sim$20\% longer than running 4 chains.\footnote{Run time is evaluated using the Python command \texttt{\%\%timeit}, which reports an average $\pm$ standard deviation run time of: $2.42\pm0.03s$ for 4 chains and $3.07\pm0.04s$ for 512 chains, evaluated by running the code 7 times.} 
Using 4 chains, we require a sampling phase of $\sim$600-700 iterations to achieve our target precision (Figure~\ref{fig:error2_banana}, left).
With 512 warmed-up chains, the target error is attained after 1 sampling iteration.

Next we compute the convergence diagnostic $\widehat R$, and we check whether or not $\widehat R$ is close to 1, for example below the threshold 1.01 proposed by \citet{Vehtari:2021}.
But even after running 1000 sampling iterations, we find that $\widehat R > 1.01$ (Figure~\ref{fig:error2_banana}, right), whether we run 4 chains or 512 chains.
Clearly, the computation cost to reduce $\widehat R$ to 1.01 far exceeds the cost to achieve our target precision, especially when running many chains.
Figure~\ref{fig:error2_banana} further suggests that, whether we run 4 chains or 512 chains, $\widehat R$ decays to 1 at the same rate, even if $\widehat R$ is noisier when computed with fewer chains.

In this paper, we show that for $\widehat R$ to go to 1, the variance of the Monte Carlo estimator generated by a \textit{single} chain must decrease to 0.
This criterion cannot be met if each individual chain is short.
Crucially, $\widehat R$ is a measure of mixing of the chains, which is a separate question than whether the final Monte Carlo estimator, obtained by averaging all the chains, has an acceptable error.  For a simple example, consider a large number of chains started at random positions from the target distribution (or, equivalently, warmed up long enough to approximate independent draws from the target). Inference can be fine after a single post-warmup iteration, even though it could take a long time for the chains to mix.

\subsection{Main Ideas: nonstationary variance and nested $\widehat R$} \label{sec:ideas}

To address this issue, we introduce \textit{nested} $\widehat R$, denoted $\nR$.
The key idea is to compare clusters of chains or \textit{superchains} rather than individual chains.
In order to still track the influence of the initialization, we require all the chains within a superchain to start at the same location.
After a sufficiently long warmup, $\nR$ decays to 1 with the number of subchains, even when each chain remains short.
We show this on our motivating example, where we split the 512 chains into 4 superchains of 128 subchains; see the right panel of Figure~\ref{fig:error2_banana}.

Our approach can be motivated by an analysis of the Monte Carlo estimator's variance.
Consider a state space $\Theta$ over which the target distribution $p$ is defined, and suppose we want to estimate $\mathbb E (f(\theta))$, where $\theta \in \Theta$ and $f$ maps $\theta$ to a univariate variable.
In practice were are interested in multiple such functions $f$.
Let $\bar f^{(1)}$ be the Monte Carlo estimator generated by a single Markov chain and let $\Gamma$ be the distribution of $\bar f^{(1)}$; that is,
\begin{equation}
  \bar f^{(1)} \sim \Gamma.
\end{equation}
$\Gamma$ is characterized by an initial draw from a starting distribution, $\theta_0 \sim p_0$,
and then $\gamma$, the construction of the Markov chain starting at $\theta_0$.
The process $\gamma$ includes the warmup phase (which is discarded) and the sampling phase.
Then by the law of total variance
\begin{equation} \label{eq:variance-decomposition}
    \text{Var}_\Gamma \ \bar f^{(1)} = \underbrace{\text{Var}_{p_0} \left [ \mathbb E_\gamma (\bar f^{(1)} \mid \theta_0) \right]}_{\text{nonstationary variance}}
    + \underbrace{\mathbb E_{p_0} \left [ \text{Var}_\gamma (\bar f^{(1)} \mid \theta_0) \right]}_{\text{persistent variance}}.
\end{equation}
We call the first term on the right side of \cref{eq:variance-decomposition} the \textit{nonstationary variance} and propose to use it as a formal measure of how well the Markov chain forgets its starting point.
Indeed, if the warmup phase is sufficiently long, the initialization should only have a negligible influence on $\mathbb E (\bar f^{(1)})$ and so the nonstationary variance should be close to 0 (even when the sampling phase is short).
Furthermore, the nonstationary variance acts as a ``proxy clock'' for the bias, meaning the time at which the nonstationary variance decays can approximate the time at which the squared bias decays.
We will illustrate, in an example, that both the squared bias and the nonstationary variance can decay at the same rate, and so that the nonstationary variance can be monitored in order to establish whether the warmup phase is sufficiently long for the bias to decay.

The second term on the right side of \cref{eq:variance-decomposition} is the \textit{persistent variance}:
even if the chain reaches stationarity during warmup, the persistent variance remains large and can only decay after a long sampling phase.
Fortunately, the persistent variance can be averaged out and its contribution to the final Monte Carlo estimator decreases to 0 as we increase the number of chains.
Therefore, in the many-short-chains regime, the quantity of primary interest is the nonstationary variance due to its relationship to the squared bias, which cannot be reduced by increasing the number of chains.

% Therefore, the persistent variance is a nuisance when checking that the Markov chain has (approximately) converged during the warmup phase.
% Naturally, it is important to check that the persistent variance is small and that the sampling phase is sufficiently long.
% However we distinguish this from the problem of assessing convergence to the stationary distribution. 
% Following \citet{Vehtari:2021}, we assess the variance of $\bar f^{(1)}$, using the Monte Carlo standard error (MCSE) or the effective sample size (ESS), only once approximate stationarity is established.

%Following \citet{Vehtari:2020}, we distinguish (i) checking for convergence of the Markov chain's distribution to $p$, which relates to bias, and (ii) checking that, after approximate stationarity, the variance of our Monte Carlo estimator is sufficiently small.
%For the purpose of diagnosing task (i), the persistent variance is a nuisance term and for task (ii), we prefer to rely on statistics other than $\widehat R$, such as measures of the ESS based on autocorrelation functions \citep{author:0000}.

\begin{figure}
\begin{center}
\begin{tikzpicture}
 [
    Box/.style={rounded corners, draw=black!, fill=green!0, thick, minimum size=10mm,
                      text width=2cm, align=center},
    Cyan/.style={rounded corners, draw=cyan!, fill=green!0, thick, minimum size=10mm,
                      text width=2cm, align=center},
    Orange/.style={rounded corners, draw=orange!, fill=green!0, thick, minimum size=10mm,
                            text width=2.35cm, align=center},
    Gray/.style={rectangle, draw=black!, fill=gray!35, thick, minimum size=1mm},
    Round/.style={circle, draw=black!, fill=green!0, thick, minimum size=1mm},
    Empty/.style={, draw=white!, fill=green!0, minimum size=0mm}
 ]

  \node[Box] (error2) at(0,0) {\small Squared Error};
  \node[Empty] (=) at (1.5, 0) {=};
  \node[Orange] (variance) at(3.25, 0) {\small Squared Bias};
  \node[Empty] (+) at (4.85, 0) {+};
  \node[Orange] (variance) at(6.5, 0) {{\small Nonstationary Variance}};
  \node[Empty] (+) at (8.15, 0) {+};
  \node[Cyan] (variance) at(9.65, 0) {\small Persistent Variance};
  \node[Empty] (nRhat) at(8., -0.85) {$\underbrace{\hspace{5.5cm}}$};
  \node[Empty] (nRhatMeasure) at (8, -1.5) {\small Monitored by $\nR$};
  \node[Empty] (time) at(4.85, 0.85) {$\overbrace{\hspace{6cm}}$};
  \node[Empty] (Decay) at (4.85, 1.5) {\small Decays as the chains converge};
  \node[Empty, text width=4cm] (DecayM) at (10, 1.75) {\small Decays as the number of subchains increases};
  \node[Empty] (mchains) at (9.65, 0.85) {$\overbrace{\hspace{2.25cm}}$};
  \end{tikzpicture}
  \caption{\em Expected squared error for the Monte Carlo estimator generated by a superchain.
  The estimator is obtained by taking the sample mean of a superchain, after discarding the warmup phase.}
  \label{fig:summary}
\end{center}
\end{figure}

To understand the behavior of $\nR$, we must now analyze the variance of the Monte Carlo estimator returned by a superchain, i.e. a cluster of chains initialized at the same point and then run independently.
We show that in this case, the persistent variance decreases linearly with the number of subchains and so becomes small if we run a large number of chains.
On the other hand, the nonstationary variance remains unchanged, reflecting the unchanged bias.
% This is because of the common initialization requirement.
$\nR$ can then effectively monitor the nonstationary variance of the Markov chain, even when each individual chain is short, provided we run many chains.
Figure~\ref{fig:summary} summarizes the driving idea behind $\nR$.

\subsection{Outline and contributions}

The paper is organized as follows:
\begin{itemize}

\item We define $\nR$ as a generalization of $\widehat R$ and analyze the variance of the Monte Carlo estimator produced by a superchain.
We showcase $\nR$'s ability to effectively monitor the nonstationary variance.
In the edge case where each Markov chain contains one sample, we propose an exact correction to remove the influence of the persistent variance.

\item We then consider a setting where the behavior of $\nR$ can be analyzed analytically: the continuous time limit of MCMC when targeting a Gaussian distribution, attained by a Langevin diffusion.
In this context, we show that the nonstationary variance decays at the same exponential rate as the squared bias.
% We then obtain conditions on the initial distribution under which $\nR$ is reliable.

\item We empirically study the variance of $\nR$ at various phases of the MCMC warmup.
Given a fixed total number of chains, we find that no choice of superchain size uniformly minimizes the variance of $\nR$ and we provide some recommendations.
Finally, we showcase the use of $\nR$ across a range of Bayesian modeling problems, including ones where mixing is slow, the target is multimodal, or the parameter space is moderately high-dimensional ($d=501$). 

\end{itemize}

Throughout the paper, details of proofs are relegated to Appendix~\ref{app:proofs}.
Code in \texttt{Python} to reproduce the figures and experiments in this paper can be found at \url{https://github.com/charlesm93/nested-rhat}.
In addition, an implementation of $\nR$ is available in the \texttt{R} package \texttt{posterior} \citep{posterior}. 

\subsection{Related work}

There has recently been a renewed interest in $\widehat R$ and its practical implementation \citep{Vehtari:2021, Vats:2021, Moins:2022}, with an emphasis on the classical regime of MCMC with 8 chains or fewer.
$\widehat R$ computes a ratio of two standard deviations and is straightforward to evaluate.
However, when applied to samples which are neither independent nor identically distributed---as is the case for MCMC---it becomes difficult to understand which quantity $\widehat R$ measures.
It is moreover unclear how to choose a threshold for $\widehat R$ to decide if the Markov chains have converged.
These issues were recently raised by \citet{Vats:2021} and \citet{Moins:2022}, who studied the convergence of $\widehat R$ in the limit of infinitely long chains.
But such an asymptotic analysis tacitly applies to stationary Markov chains and convergence of the Monte Carlo estimator itself during the sampling phase, rather than convergence in $D_\text{TV}$ or in bias during the warmup.
As prescribed by the many-short-chains regime, we took limits in another direction: an infinite number of finite nonstationary Markov chains.
This perspective elicits the nonstationary variance, sheds light on the advantages and limitations of $\widehat R$, and motivates $\nR$.

Beyond running a long warmup phase, many strategies have been proposed to control the bias of Monte Carlo estimators.
Examples include annealed importance sampling \citep{Neal:2001}, sequential Monte Carlo \citep{DelMoral:2006}, and
unbiased MCMC \citep{Glynn:2014, Jacob:2020}.
The latter relies on transition kernels which allow pairs of Markov chains to couple after a finite time.
Once a coupling occurs, we can construct unbiased estimators of expectation values.
In this sense, the \textit{coupling time} replaces the traditional warmup phase.
Designing such transition kernels with short coupling times is an active area of research \citep{Heng:2019, Jacob:2020, Nguyen:2022}, but at present it is not always possible to find a kernel that couples quickly. 
For example, Hamiltonian Monte Carlo \citep[HMC;][]{Neal:2012, Betancourt:2018} methods using many integration steps per proposal are often the only viable option for sampling from poorly conditioned high-dimensional posteriors over a continuous space.
Unfortunately the coupling HMC kernel of \citet{Heng:2019} only couples quickly if the problem is sufficiently well conditioned that HMC converges rapidly with a relatively small number of integration steps per proposal.

Methods such as Stein thinning \citep{Riabiz:2022} can remove strongly biased samples produced during the early stages of MCMC, and in some sense automatically discard a warmup phase \citep{South:2021}.
However, Stein methods can be computationally expensive and may not scale well in high dimensions.
Furthermore, if the total length of the Markov chain is too short, no thinning can remove the bias incurred by the initialization---a problem which, arguably, $\nR$ can detect and so the proposed diagnostic may be used conjointly with Stein methods.

\section{Nested $\widehat R$} \label{sec:nRhat}

The motivation for $\nR$ is to construct a diagnostic which does not conflate poor convergence and short sampling phase.
$\nR$ works by comparing Monte Carlo estimators whose persistent variance decreases with the number of chains.
However, the bias does not decrease with the number of chains and this should be reflected in the nonstationary variance.
Our solution is introduce superchains, which are groups of chains initialized at the same point and then run independently.

Consider an MCMC process over a state space $\Theta$, which converges in $D_\text{TV}$ to $p$ for any choice of initialization; 
\citet{Roberts:2004} review conditions under which a Markov chain converges in a general state space.
We denote the starting distribution $p_0$.
Each Markov chain is initialized at a point drawn from $p_0$, and comprises $\mathcal W$ warmup iterations, which are discarded, and $N$ sampling iterations.
The warmup phase can be successive applications of a fixed transition kernel or it can involve an adaptation of the transition kernel \citep{Andrieu:2008}.
In the latter case, we require that the Markov chain still converges in $\text{D}_\text{TV}$ to $p$; this can be achieved by choosing an adaptation scheme that preserves the stationary distribution \citep[e.g.,][]{Gilks:1994, Hoffman:2022} or by freezing the transition kernel after a finite number of warmup iterations.
The transition kernel stays fixed during the sampling phase.

So far, we have considered a standard setup for MCMC.
We now introduce the concept of superchain.
\begin{definition}
    (Superchain) We call superchain a collection of $M$ Markov chains. Furthermore,
    \begin{enumerate}
        \item We say the superchain is {\em constrained} if all its subchains are initialized at the same point,
        \item We say the superchain is {\em naive} if all its subchains are initialized independently.
    \end{enumerate}
    In either case, conditional on the initial point, the Markov chains are independent.
\end{definition}
Throughout the paper, we primarily focus on constrained superchains and drop the ``constrained'' prefix for brevity.
Naive superchains are more straightforward to construct and serve as a benchmark.

We generate $K$ superchains, each compromised of $M$ subchains.
In the case $M=1$ where each superchain only contains a single subchain, we recover the classic regime of MCMC.
% Each subchain is made of $\mathcal W$ warmup iterations, which we discard, and $N$ sampling iterations.
%
We denote as $\theta^{(nmk)}$ the $n$th draw from chain $m$ of superchain $k$.
Suppose our goal is to estimate $\mathbb E (f(\theta))$ for some function $f$ that maps $\theta$ to a univariate variable and assume furthermore that $\text{Var}_p f(\theta) < \infty$.
The sample mean of each superchain is
\begin{equation}
    \label{eq:nrhat}
    \bar f^{(..k)} = \frac{1}{MN} \sum_{m = 1}^M \sum_{n = 1}^N f \left (\theta^{(nmk)} \right).
\end{equation}
Moving forward, we write $f^{(nmk)} = f (\theta^{(nmk)})$ to alleviate the notation.
The final Monte Carlo estimator is obtained by averaging all the superchains,
\begin{equation}
  \bar f^{(\cdot \cdot \cdot)} = \frac{1}{K} \sum_{k = 1}^K \bar f^{(\cdot \cdot k)}.
\end{equation}
We now define $\nR$.
\begin{definition} \label{def:nRhat}
Consider the estimator $\nB$ of the between-superchain variance,
\begin{equation} \label{eq:Bhat}
  \nB = \frac{1}{K - 1} \sum_{k = 1}^K \left (\bar f^{(\cdot \cdot k)} - \bar f^{(\cdot \cdot \cdot)} \right)^2,
\end{equation}
and the estimator $\nW$ of the within-superchain variance,
\begin{eqnarray}  \label{eq:What}
    \nW & = & \frac{1}{K} \sum_{k = 1}^K \left (\tilde B_k  + \tilde W_k \right), %\nonumber \\
    %& + & \left. \frac{1}{M} \sum_{m = 1}^M \frac{1}{N - 1} \sum_{n = 1}^N \left( \theta^{(nmk)} - \bar \theta^{(\cdot mk)} \right)^2
    % \right).
\end{eqnarray}
where we require either $N > 1$ or $M > 1$, and we introduce the estimators $\tilde B_k$ of the between-chain variance and $\tilde W_k$ of the within-chain variance; that is
\begin{eqnarray*}
  \tilde B_k & \triangleq & \begin{cases}
    \frac{1}{M - 1} \sum_{m = 1}^M \left(\bar f^{(\cdot mk)} - \bar f^{(\cdot \cdot k)} \right)^2 & \ \text{if \ } M > 1, \\
    0 & \ \text{if \ } M = 1,
  \end{cases} \\
  \tilde W_k & \triangleq & \begin{cases}
    \frac{1}{M} \sum_{m = 1}^M \frac{1}{N - 1} \sum_{n = 1}^N \left( f^{(nmk)} - \bar f^{(\cdot mk)} \right)^2 & \ \text{if \ } N > 1, \\
    0 & \ \text{if \ } N = 1.
  \end{cases}
\end{eqnarray*}
Then $\nR$ is the ratio between the total sample standard deviation across all super chains and the average within-superchain sample standard deviation, that is
 \begin{equation}
   \nR \triangleq \sqrt{\frac{\nW + \nB}{\nW}} = \sqrt{1 + \frac{\nB}{\nW}}.
 \end{equation}
\end{definition}

\ \\[-0.25in]
\begin{remark}
% In the special case where $N = 1$ we take the within-chain variance to be 0, which makes $\mathfrak n\widehat W$ smaller and results in a stricter tolerance on $\nB$.
%
% Similarly, when $M = 1$ we take the between chain variance to be 0,
% thereby removing the first term when calculating $\widehat W$.
In the edge case where $M = 1$, $\nR$ reduces to $\widehat R$.\footnote{The original $\widehat R$ uses a slightly different estimator for the within-chain variance $\tilde W_k$ when computing the numerator in $\widehat R$.
There the sample variance $\tilde W_k$ is scaled by $1 / N$, rather than $1 / (N - 1)$.
This explains why occasionally $\widehat R < 1$.
This is of little concern when $N$ is large, but we care about the case where $N$ is small, and we therefore adjust the $\widehat R$ statistic slightly.}
% The classical $\widehat R$ is the special case of one chain per superchain.
\end{remark}

A common practice to assess convergence of the chains is to check that $\widehat R \le 1 + \epsilon$ for some $\epsilon > 0$.
Recommended choices of $\epsilon$ have evolved over time, starting at $\epsilon = 0.1$ \citep{Gelman:1992} and recently using the more conservative value $\epsilon = 0.01$ \citep{Vehtari:2021}.
In the proposed nested design, we now check whether $\nR \le 1 + \epsilon$ and therefore,
\begin{equation} \label{eq:tolerance}
    \nB \le 2 \epsilon \ \nW + \mathcal O(\epsilon^2).
\end{equation}
This inequality establishes a tolerance value for $\nB$, scaled by the within-superchain-variance $\nW$.
Moreover, we want to check that despite the distinct initialization and seed, the sample means of each superchain are in good agreement, as measured by $\nB$.

\section{Properties of nested $\widehat R$}

We now analyze the properties of $\nR$ theoretically and illustrate its use on several examples.
For all formal statements, we assume the superchains are constrained, meaning all their subchains are initialized at the same point.

\subsection{Which quantity does $\nR$ measure?}

To answer this question, we consider the asymptotics of $\nR$ along $K$, the number of superchains.
Applying the law of large numbers,
\begin{equation}
  \nB \overset{\text{a.s}}{\underset{K \to \infty}{\longrightarrow}} \B \triangleq \text{Var}_\Gamma (\bar f^{(\cdot \cdot k)}).
\end{equation}
Then by the law of total variance,
\begin{equation} \label{eq:nB}
  \B = \text{Var}_{p_0} \left [ \mathbb E_\gamma (\bar f^{(\cdot \cdot k)} \mid \theta^k_0)  \right]
  + \mathbb E_{p_0} \left [ \text{Var}_\gamma (\bar f^{(\cdot \cdot k)} \mid \theta^k_0) \right].
\end{equation}
Because the chains are identically distributed and conditionally independent, we have
\begin{eqnarray}
  \mathbb E_\gamma (\bar f^{(\cdot \cdot k)} \mid \theta^k_0) & = & \mathbb E_\gamma (\bar f^{(\cdot m k)} \mid \theta^k_0) \\
  \text{Var}_\gamma (\bar f^{(\cdot \cdot k)} \mid \theta^k_0)
  & = & \frac{1}{M} \text{Var}_\gamma (\bar f^{(\cdot m k)} \mid \theta^k_0).
\end{eqnarray}
Plugging these results back into \eqref{eq:nB} yields the following result.
\begin{theorem}  \label{thm:nB}
Consider a constrained superchain initialized at $\theta^k_0 \sim p_0$ and made of $M$ chains. Then the variance of the superchain's sample mean is
\begin{equation} \label{eq:nB}
  \B = \underbrace{\mathrm{Var}_{p_0} \left [ \mathbb E_\gamma (\bar f^{(\cdot m k)} \mid \theta^k_0) \right]}_\text{nonstationary variance}
  + \underbrace{\frac{1}{M} \mathbb E_{p_0} \left [ \mathrm{Var}_\gamma (\bar f^{(\cdot m k)} \mid \theta^k_0) \right]}_\text{persistent variance}.
\end{equation}
\end{theorem}

\begin{remark}
From Theorem~\ref{thm:nB}, we immediately see that the constrained superchain has the same nonstationary variance as a single chain, but the persistent variance decreases linearly with $M$.
\end{remark}

\begin{figure}[!]
    \centering
    \includegraphics[width=4in]{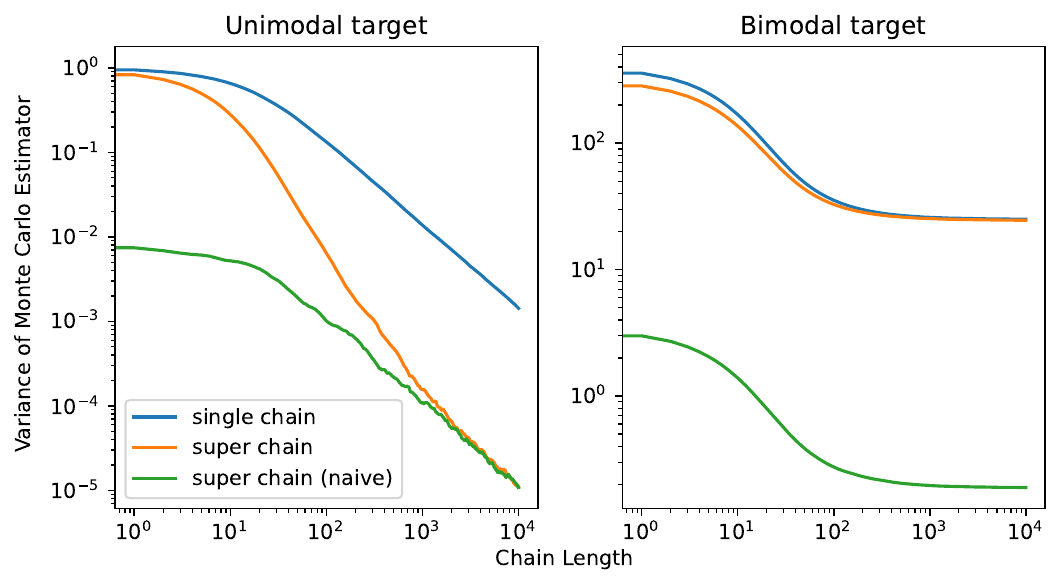}
    \caption{\em Variance of Monte Carlo estimators, $\bar f^{(\cdot \cdot k)}$, constructed using a single chain, a constrained superchain of 1024 subchains initialized at the same point, or a naive superchain of 1024 independent subchains.
    When the Markov chains converge, the variance of a superchain decreases to the variance of a naive superchain.
    This transition occurs because the nonstationary variance decays to 0.
    If the Markov chains do not converge, the variance of a superchain stays large.
    }
    \label{fig:transition}
\end{figure}

We demonstrate the behavior of $\B$ on two target distributions:
\begin{itemize}
    \item A standard Gaussian, $p = \text{normal}(0, 1)$, with initial distribution, $p_0 = \text{normal}(2, 1)$.
    \item An unbalanced mixture of two Gaussians, $p = 0.3 \,\text{normal}(-10, 1) + 0.7 \,\text{normal}(10, 1)$, with initial distribution, $p_0 = \text{normal}(0, 20)$.
\end{itemize}
As benchmarks, we use $B$, the variance of the Monte Carlo estimator generated by a single chain, and $\tilde \B$, the variance of the Monte Carlo estimator generated by a naive superchain.
Because there are no initialization constraints, the nonstationary variance of the naive superchain also scales as $1 / M$ (but the bias stays the same!).
Indeed,
\begin{eqnarray}
    \tilde \B & = & \text{Var}_\Gamma(\bar f^{(\cdot \cdot k)}) \nonumber \\
    & = & \frac{1}{M} \text{Var}_\Gamma(\bar f^{(\cdot m k)}) \nonumber \\
    & = & \frac{1}{M} \text{Var}_{p_0} \left [ \mathbb E_\gamma (f^{(\cdot m k)} \mid \theta^{mk}_0) \right ] + \frac{1}{M} \mathbb E_{p_0} \left [ \text{Var}_\gamma (f^{(\cdot m k)} \mid \theta^{mk}_0) \right ], \\
    & = & \frac{1}{M} \text{Var}_{p_0} \left [ \mathbb E_\gamma (f^{(\cdot m k)} \mid \theta^{k}_0) \right ] + \frac{1}{M} \mathbb E_{p_0} \left [ \text{Var}_\gamma (f^{(\cdot m k)} \mid \theta^{k}_0) \right ]
\end{eqnarray}
where the second line follows from the independence of the Markov chains (something that does not hold for constrained superchains) and the third line from the law of total variance.
In the final line, we replaced $\theta^{mk}_0$ with $\theta^k_0$---seing both variables are drawn from $p_0$---to recover the terms in \eqref{eq:nB}.

We run Hamiltonian Monte Carlo on a GPU and use $M = 1028$ chains for each superchain.
Figure~\ref{fig:transition} shows the results.
When the chains converge, $\B$ first behaves like $B$ and then transitions to behaving like $\tilde \B$, as the nonstationary variance decays.
When the chains do not converge, the transition does not occur and $\B$ stays large, giving a clear indication that the chains do not converge.
One drawback is that the constrained superchain has a larger variance than the naive superchain, especially during the early stages of MCMC; however, if the Markov chains converge, the nonstationary variance vanishes, and the expected squared error is dominated by the persistent variance, which is the same for both types of superchains.

% This example elicits a trade-off between using a naive and a constrained superchain.
% The naive superchain has a smaller variance, because its nonstationary variance scales as $1 / M$.
% On the other hand, the constrained superchain has a total variance which is dominated by the nonstationary variance, and so the latter can be well monitored using a sample variance estimator.
% $\nB$ can then be used to distinguish between convergence (unimodal target) and non-convergence (bimodal target).
% We emphasize that if the Markov chains converge, the nonstationary variance vanishes, and eventually the expected squared error is dominated by the persistent variance, which is the same for both types of superchains.

% We also showcase the behavior of $\nR$ on the Rosenbrock distribution for warmed up chains in Figure~\ref{fig:error2_banana}.
% Unlike $\widehat R$, $\nR$ is close to 1, accurately reflecting that the chains are close to their stationary distribution, even with a short sampling phase.

% $\nB$ does not provide a direct measure of the nonstationary variance.
% However, the nuisance term now decreases the number of subchains $M$.

We end this section with two corollaries of Theorem~\ref{thm:nB}.
To show these corollaries, we first require the following lemma on the asymptotic limit of $\nW$.
\begin{lemma} \label{lemma:nW}
  In the limit of an infinite number of superchains,
    \begin{equation}
    \lim_{K \to \infty} \nW = \W \triangleq \mathbb E_{p_0} \text{Var}_\gamma (\bar f^{(\cdot mk)} \mid \theta^k_0)+ W',
  \end{equation}
  where
  \begin{equation*}
  W' \triangleq \begin{cases}
    \frac{1}{N - 1} \sum_{n = 1}^N \text{Var}_\Gamma f^{(nmk)} - \text{Var}_\Gamma \bar f^{(\cdot mk)}
       + (\mathbb E_\Gamma f^{(nmk)})^2 - (\mathbb E_\Gamma f^{(\cdot mk)})^2 & \ \text{if} \ N > 1 \\
       0 & \ \text{if} \ N = 1,
  \end{cases}
\end{equation*}
  
\end{lemma}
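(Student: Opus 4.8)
The plan is to reduce the claim to a single application of the strong law of large numbers together with the second-moment computation already performed for Lemma~\ref{lemma:Rhat_measure}. The $K$ superchains are mutually independent and each is built by the same recipe (draw $\theta_0^k \sim \pi_0$, then run $M$ conditionally independent subchains through warmup and sampling), so the random variables $\tilde B_k + \tilde W_k$, for $k = 1, \dots, K$, are i.i.d. Assuming the draws have finite second moments---the same regularity already implicit in the use of $\widehat R$---each $\tilde B_k + \tilde W_k$ is integrable, and the SLLN yields
\begin{equation*}
  \mathfrak n \widehat W = \frac{1}{K} \sum_{k=1}^K (\tilde B_k + \tilde W_k) \ \xrightarrow[K \to \infty]{a.s} \ \mathbb E_\Gamma \tilde B_1 + \mathbb E_\Gamma \tilde W_1 .
\end{equation*}
So the task reduces to evaluating these two expectations and checking that their sum is $\mathfrak n W$.

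For the between-chain piece I would condition on the common starting point. When $M > 1$, conditionally on $\theta_0^k$ the subchains of superchain $k$ are i.i.d., hence so are the per-chain means $\bar \theta^{(\cdot 1 k)}, \dots, \bar \theta^{(\cdot M k)}$; and $\tilde B_k$ is exactly their Bessel-corrected sample variance, so the standard unbiasedness identity gives $\mathbb E_\gamma(\tilde B_k \mid \theta_0^k) = \text{Var}_\gamma(\bar \theta^{(\cdot mk)} \mid \theta_0^k)$. Taking the expectation over $\theta_0^k \sim \pi_0$ then gives $\mathbb E_\Gamma \tilde B_k = \mathbb E_{\pi_0} \text{Var}_\gamma(\bar \theta^{(\cdot mk)} \mid \theta_0^k)$, which is the first term of $\mathfrak n W$. (When $M = 1$ the definition sets $\tilde B_k = 0$, and $\mathfrak n W$ is understood with this term removed, as noted after Definition~\ref{def:nRhat}.) For the within-chain piece I would use that the subchains are marginally identically distributed under $\Gamma$, so $\mathbb E_\Gamma \tilde W_k$ equals $\mathbb E_\Gamma\big[\tfrac{1}{N-1}\sum_{n=1}^N (\theta^{(nmk)} - \bar \theta^{(\cdot mk)})^2\big]$ for any fixed $m$ (when $N > 1$); this is precisely the expectation computed inside the proof of Lemma~\ref{lemma:Rhat_measure}, the superchain index being an inert relabeling, so it equals $W'$. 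When $N = 1$ both $\tilde W_k$ and $W'$ are zero and the identity still holds. Summing the two contributions gives $\mathbb E_\Gamma(\tilde B_1 + \tilde W_1) = \mathbb E_{\pi_0}\text{Var}_\gamma(\bar \theta^{(\cdot mk)} \mid \theta_0^k) + W' = \mathfrak n W$.

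The only ingredient that is genuinely new relative to Lemma~\ref{lemma:Rhat_measure} is the conditional-i.i.d.\ argument for $\tilde B_k$, and that step is routine once the superchain construction is spelled out; everything else either quotes independence of superchains (already used for $\mathfrak n \widehat B$) or recycles the earlier within-chain calculation. I expect the main care needed to be purely bookkeeping: handling the degenerate cases $M = 1$ and $N = 1$ so that the stated formula for $\mathfrak n W$ matches the conventions in Definition~\ref{def:nRhat}, and stating explicitly the finite-second-moment assumption that makes both the SLLN and the unbiasedness identity legitimate.
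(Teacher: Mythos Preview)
Your proposal is correct and follows the same overall strategy as the paper: apply the SLLN across the i.i.d.\ superchains to reduce to $\mathbb E_\Gamma \tilde B_k + \mathbb E_\Gamma \tilde W_k$, then evaluate each expectation separately, with the $\tilde W_k$ piece recycled from Lemma~\ref{lemma:Rhat_measure}.

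The only notable difference is in how $\mathbb E_\Gamma \tilde B_k$ is computed. You condition on $\theta_0^k$ first and invoke the unbiasedness of the Bessel-corrected sample variance for the conditionally i.i.d.\ subchain means, giving $\mathbb E_{\pi_0}\,\text{Var}_\gamma(\bar\theta^{(\cdot mk)}\mid\theta_0^k)$ in one step. The paper instead takes the unconditional expectation directly (reusing the second-moment manipulation from the proof of Lemma~\ref{lemma:Rhat_measure}) to obtain $\tfrac{M}{M-1}\bigl(\text{Var}_\Gamma \bar\theta^{(\cdot mk)} - \text{Var}_\Gamma \bar\theta^{(\cdot\cdot k)}\bigr)$, then applies the law of total variance to each term and cancels the common nonstationary-variance piece. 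Your route is shorter and arguably cleaner, since the conditioning makes the i.i.d.\ structure explicit and avoids the detour through unconditional variances; the paper's route has the virtue of reusing the earlier computation verbatim and making the connection to Theorem~\ref{thm:nB} visible along the way. Either argument is fine.
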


The first corollary of Theorem~\ref{thm:nB} examines the behavior of $\nR$ when the samples are in stationarity and all chains, including ones within a superchain, are independent.
This setting approximates the behavior of Markov chains after a long warmup phase, specifically the limit where the Markov chains have converged to their stationary distribution and forgortten their starting point before sampling.
% Bearing a slight abuse of notation, we use ``$\mathcal W \to \infty$'' to refer to the fact the collected samples are at stationarity.

\begin{corollary}  \label{thm:lower-nR}
  % Suppose that all chains within a superchain start at the same point $\theta_0^k \sim p_0$ and that each chain is made of $M$ subchains.
  Suppose that all chains within the same superchain are independent and initialized from the stationary distribution.
  Assume further that the length of the sampling phase $N > 1$ and the chains have positive autocorrelation.
  Then, % for samples collected at stationarity,
  % Then, for an infinitely long warmup phase,
  \begin{equation}
    \sqrt{1 + \frac{\B}{\W}} 
    \ge \sqrt{1 + \frac{1}{M} \frac{1 - 1 / N}{\mathrm{ESS}_{(1)} + 1 / N}},
  \end{equation}
  where  $\mathrm{ESS}_{(1)}$ is the effective sample size for a single chain of length $N$.
\end{corollary}

From the above, we see that without the nesting design (i.e. $M = 1$ case), $\widehat R$ can only decay to 1 if $\text{ESS}_{(1)}$ is large, because we need to kill off the persistent variance.
This explains the results seen in Figure~\ref{fig:error2_banana} and
echoes the observation by \citet{Vats:2021} that for stationary Markov chains, $\widehat R$ (or rather the quantity measured by $\widehat R$) is a one-to-one map with the ESS.
We emphasize however that this equivalence does not hold when the chains are not stationary.

One persistent problem is that, even though we want to monitor the nonstationary variance to assess convergence, we instead measure the total variance.
Increasing $M$ and $N$ reduces the persistent variance, and makes $\B$ a sharper bound on the nonstationary variance.
The second corollary of Theorem~\ref{thm:nB} tells us that when $N=1$, that is each chain contains a single sample, we can exactly correct for the contribution of the persistent variance. 
The $N=1$ case is is mathematically convenient and takes the logic of the many-short-chains regime to its extreme, with the variance reduction entirely handled by the number of chains.
%
% In general, increasing $M$ and $N$ reduces the persistent variance and so makes $\nB$ a sharper upper bound on the nonstationary variance.
% When $N = 1$, that is each chain only contains one sampling iteration, we can correct exactly for the contribution of the persistent variance to $\nR$.
% The $N = 1$ case is mathematically convenient and takes the logic of the many-short-chains regime to its extreme, with the variance reduction entirely handled by the number of chains.
%
\begin{corollary} \label{corr:N=1}
  Suppose that all chains within a superchain start at the same point $\theta_0^k \sim p_0$ and that each chain is made of $M$ subchains.
  Assume further that $N = 1$.
  Then the persistent variance, scaled by $\W$, is
  \begin{equation}
    \frac{\mathbb E_{p_0} \text{Var}_\gamma (\bar f^{(\cdot \cdot k)} \mid \theta^k_0)}{\W} = \frac{1}{M},
  \end{equation}
  and furthermore
  \begin{equation}
    \sqrt{1 + \frac{\B}{\W}} = \sqrt{1 + \frac{1}{M} + 
      \frac{\mathrm{Var}_{p_0} \mathbb E_\gamma (f^{(1 m k)} \mid \theta^k_0)}{
        \mathbb E_{p_0} \mathrm{Var}_\gamma (f^{(1 mk)} \mid \theta^k_0)}}.
  \end{equation}
\end{corollary}
%
% \begin{remark}
%   Corollary~\ref{corr:N=1} gives us a consistent estimator of the nonstationary variance, scaled by the within-chain variance,
%   \begin{equation}
%       \frac{\nB}{\nW} - \frac{1}{M} \underset{K \to \infty}{\longrightarrow} \frac{\mathrm{Var}_{p_0} \mathbb E_\gamma (f^{(1 m k)} \mid \theta^k_0)}{
%         \mathbb E_{p_0} \mathrm{Var}_\gamma (f^{(1 mk)} \mid \theta^k_0)}.
%   \end{equation}
% \end{remark}
%
We can now express a threshold on $\nR$ as a threshold on the nonstationary variance, scaled by the within-superchain  variance.
Conversely, we can express a bound on the standardized nonstationary variance as an equivalent bound on $\nR$.
We demonstrate this approach in Section~\ref{sec:tolerance}.
\begin{remark}
    Such a result is not available for $\widehat R$, which by definition must be computed for $N > 1$.
\end{remark}

One potential limitation of Corollary~3.5 is that it still does not give us a consistent estimator of the nonstationary variance, because of the scaling by the within-superchain variance $\W = \mathbb E_{p_0} \mathrm{Var}_\gamma (f^{(1 mk)} \mid \theta^k_0)$.
However, when assessing the quality of MCMC, it is common to standardize quantities of interest, usually by the variance $\text{Var}_p(f)$, in order to obtain a scale-free measure of information.
This is especially helpful when running diagnostics for a large number of variables, since then the threshold need not (in general) be adjusted for each variable.
Here, $\nR$ is scale-free, in the same manner as $\widehat R$ and the ESS.

Now, the nonstationary variance is not scaled by $\text{Var}_p(f)$, which is unknown, but by the (estimated) within-superchain variance $\W$.
At stationarity, $\W = \text{Var}_p(f)$.
For nonstationary Markov chains, $\nR$ may erroneously report convergence if $\W \gg \text{Var}_p(f)$, as this would shrink $\nR$ to 1, even if the nonstationary variance is large relative to $\text{Var}_p(f)$.
In general however, we expect $\W \le \text{Var}_p(f)$ for nonstationary Markov chains, based on empirical observations on the within-chain variance used for $\widehat R$ \citep{Vehtari:2021}.
Two illustrations of this phenomenon are (i) very early in the warmup phase, when all chains within a superchain concentrate around their common initialization and (ii) for $p$ a multimodal distribution, when all subchains explore the same basin of attraction and so  the within-superchain variance corresponds to the variance around one mode, which is inferior to the total variance of $p$.

% This result does not depend on the transition kernel, nor on the length of the warmup phase,
% and it applies to nonstationary Markov chains.

\subsection{Bias and nonstationary variance}  \label{sec:reliable-theory}

We have established that $\nR$ monitors the nonstationary variance.
However the primary goal of the warmup phase is to reduce the bias.
In this section, we examine the connection between the nonstationary variance and the bias in an illustrative example.
We first define the idea of reliability for a convergence diagnostic.

\begin{definition}
    For a univariate random variable $f$, we say an MCMC process is $(\delta, \delta')$-reliable for $\nR$ if
    \begin{equation}
        \frac{\B}{\W} \le \delta \implies \frac{(\mathbb E_\Gamma \bar f^{(\cdot m)} -  \mathbb E_p f)^2}{ \mathrm{Var}_p f} \le \delta'.
    \end{equation}
\end{definition}
Since $\widehat R$ is a special case of $\nR$, this also defines reliability for $\widehat R$.
\citet{Gelman:1992} tackled the question of $\widehat R$'s reliability by using an overdispersed initialization.
Here, we provide a formal proof that in the Gaussian case $(\delta, \delta')$-reliability is equivalent to using an initial distribution $p_0$ with a large variance relative to the initial bias.
The Gaussian case provides intuition for unimodal targets and can be a reasonable approximation after rank normalization of the samples \citep{Vehtari:2021}.

Let $p = \text{normal}(\mu, \sigma^2)$.
To approximate a large class of MCMC processes, we consider the solution $(X_t)_{t\ge0}$ of the Langevin diffusion targeting $p$ defined by the stochastic differential equation
\begin{equation} \label{eq:sde}
  d X_t = - \,(X_t - \mu) d t + \sqrt{2 \sigma} \, d W_t,
\end{equation}
where $(W_t)_{t\ge0}$ is a standard Brownian motion. 
The convergence of MCMC toward contin\-u\-ous-time stochastic processes has been widely studied, notably by \citet{Gelman:1997} and \citet{Roberts:1998}, who established scaling limits of random walk Metropolis and the Metropolis adjusted Langevin algorithm toward Langevin diffusions. Similar studies have been conducted for Hamiltonian Monte Carlo and its extensions; see, e.g., \citet{Beskos:2013}, \citet{Riou-Durand:2022}. Typically, the solution of a continuous-time process after a time $T>0$ is approximated by a Markov chain, discretized with a time step $h>0$ and run for $\lfloor T/h \rfloor$ steps. We consider here a Gaussian initial distribution
\begin{equation}
p_0 = \text{normal}(\mu_0, \sigma_0).
\end{equation}
In this setup, the bias and the variance of the Monte Carlo estimator admit an analytical form.
The solution $X_T$ is interpreted as an approximation as $h\rightarrow 0$ of the setting of parallel chains for $\mathcal W=\lfloor T/h \rfloor$ iterations and $N = 1$, i.e., $\theta^{(1mk)}=X_T$.
This scenario is the simplest one to analyze and illustrates an edge case of the many-short-chains regime.

The main result of this section states that the squared bias and scaled nonstationary variance decay at a rate $\sim e^{-2T}$,
providing justification as to why the latter can be used as a proxy clock for the former.
\begin{theorem} \label{lemma:langevin}
Let $(X_t)_{t\ge0}$ be the solution to \eqref{eq:sde}, which describes a Langevin diffusion targeting $p = \text{normal}(\mu, \sigma^2)$, starting from $X_0 \sim p_0 = \text{normal}(\mu_0, \sigma_0^2)$. Then for any warmup time $T>0$, the squared bias is
\begin{equation}
  \left (\mathbb E \bar \theta^{(1 \cdot k)} - \mathbb E_p X \right)^2 = \left (\mathbb E X_T - \mathbb E_p X \right)^2 = (\mu_0 - \mu)^2 e^{-2T}.
\end{equation}
Furthermore,
\begin{equation}  \label{eq:langevin_ratio}
  \frac{\B}{\W} = \frac{\mathrm{Var}_{p_0}\mathbb E (X_T \mid X_0 )}{\mathbb E_{p_0} \mathrm{Var} (X_T \mid X_0 )} = 
  \frac{1}{M} + \frac{\sigma_0^2}{\sigma^2 (e^{2T} - 1)}.
\end{equation}
\end{theorem}
We end this section with a corollary that states that the initialization needs to be sufficiently dispersed for $\nR$ to be reliable.
\begin{corollary} \label{thm:langevin}
  Let $\delta > 0$ and $\delta' > 0$.
  Assume the conditions stated in Theorem~\ref{lemma:langevin}. 
  If $(\mu_0 - \mu)^2 / \sigma^2 \le \delta'$, $\nR$ is trivially $(\delta, \delta')$-reliable.
  If $(\mu_0 - \mu)^2 / \sigma^2 > \delta'$, then
  $\nR$ is $(\delta, \delta')$-reliable if and only if
  \begin{equation}  \label{eq:langevin-lower}
    \sigma_0^2 > \left (\delta - \frac{1}{M} \right) \left (\frac{(\mu - \mu_0)^2}{\delta' \sigma^2} - 1 \right) \sigma^2.
  \end{equation}
\end{corollary}

\begin{remark}
  If $\delta < 1 / M$, then the condition $\B / \W < \delta$ cannot be verified.
  Hence $\nR$ is reliable in a trivial sense: we never erroneously claim convergence because we never claim convergence.
\end{remark}

% \begin{remark}
%     If $\delta < 1 / M$, $\nR$ is trivially reliable in the sense that (on average) we never declare convergence, no matter how long $T$ is.
% \end{remark}

Appendix~\ref{app:reliability} provides additional results on the reliability of $\nR$: we numerically test the lower bound provided by Corollary~\ref{thm:langevin} on a Gaussian and a mixture of Gaussians, and we then derive a similar bound for $\widehat R$ when $N > 1$.

\subsection{Variance of $\nR$} \label{sec:variance}

In the previous sections, we focused on the quantity measured by $\nR$, attained in the limit of a large number of superchains, that is $K \to \infty$.
We now assume a finite number of superchains.
Even if $\nR$ is $(\delta, \delta')$-reliable, it may be too noisy to be useful.
An example of this arises in the multimodal case, where all $K$ superchains may initialize \textit{by chance} in the attraction basin of the same mode.
In this scenario, $K$ is too small and we drastically underestimate $\B = \text{Var}_\Gamma \bar f^{(\cdot \cdot k)}$.
One would need enough superchains with distinct initializations to find multiple modes and diagnose the chains' poor mixing.

When using $\widehat R$ it seems reasonable to increase the number of chains as much as possible.
The question is more subtle for $\nR$: given a fixed total number of chains $KM$ how many superchains should we run?
In general no choice of $K$, given $KM$, minimizes the variance of $\nB / \nW$ across all stages of MCMC.

\begin{figure}
  \begin{center}
  \includegraphics[width=2.5in]{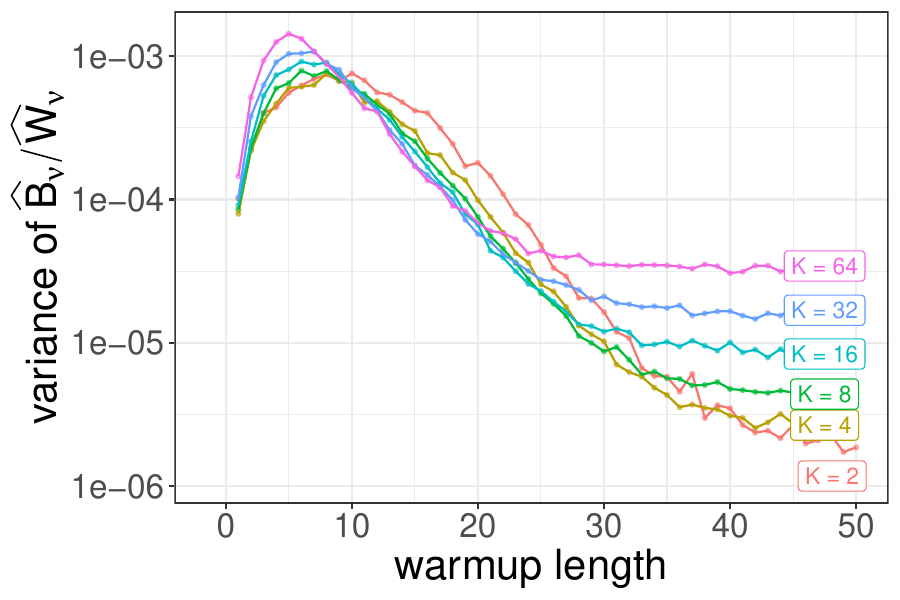}
  \includegraphics[width=2.5in]{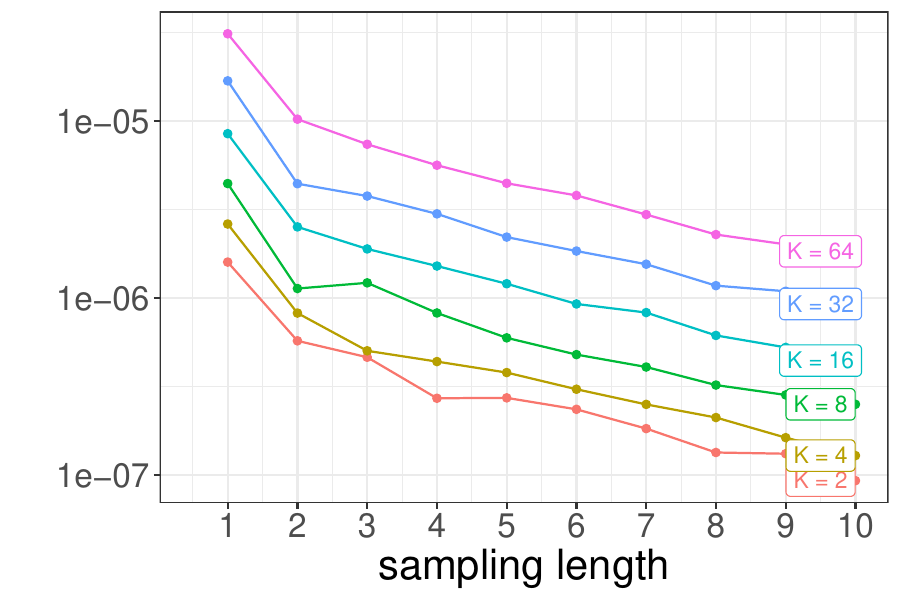}
  \caption{\em Variance of $\nB / \nW$ when running $KM = 2048$ chains split into $K$ superchains.
  (left) $\nB / \nW$ is computed using $N = 1$ samples after a varying warmup length.
  No choice of $K$ uniformly minimizes the variance across all phases of MCMC.
  (right) This time, the chains are stationary.
  Increasing the length of the sampling phase $N$ reduces the variance of $\nB / \nW$.
  }
  \label{fig:variance_K_N}
  \end{center}
\end{figure}

We demonstrate this phenomenon when targeting a Gaussian with $KM = 2048$ chains (left panel of Figure~\ref{fig:variance_K_N}).
Here the choice $K = 2$ minimizes the variance once the chains are stationary but nearly maximizes it during the early stages of the warmup phase.
For $K$ small, all the chains may be in agreement by chance even when they are not close to the stationary distribution.
Increasing $K$ helps avoid this scenario.
On the other hand, a large $K$ results in a large variance once the chains approach their stationary distribution.
This is because the persistent variance remains large if $K$ is large (and therefore $M$ is small), even once the nonstationary variance vanishes.
In other words $\nB$ remains noisy because of a large nuisance term.
We now see the competing forces at play when choosing $K$.
Empirically, we find that several choices of $K$ work well across a collection of problems (Section~\ref{sec:experiments}).

The variance of $\nR$ can be further reduced by increasing the length of the sampling phase $N$. The right panel of 
Figure~\ref{fig:variance_K_N} demonstrates the reduction in variance as $N$ varies from 1 to 10.
For many problems, running 10 more iterations is computationally cheap.
A large $N$ also makes $\B$ a sharper upper bound on the nonstationary variance.
However, if $N > 1$, we cannot use Corollary~\ref{corr:N=1} to exactly characterize and correct for the persistent variance. 

\subsection{Error tolerance and threshold for $\nR$} \label{sec:tolerance}

%The current recommendation for $\widehat R$ is to ``declare'' convergence if $\widehat R \le 1.01$, while still running additional checks, such as examining trace plots and more \citep{Vehtari:2021};
%see \citet{Vats:2021} and \citet{Moins:2022} for competing perspectives on the question.
Ultimately, our goal is to control the error of our Monte Carlo estimators.
Consider the bias-variance decomposition for an estimator returned by a superchain,
\begin{align}
 \mathbb E_\Gamma \left ( (\bar f^{(\cdot \cdot k)} \right. \left. - \mathbb E_p f)^2 \right)
     = \underbrace{(\mathbb E_\Gamma \bar f^{(\cdot \cdot k)} - \mathbb E_p f)^2}_{\text{squared bias}} +  \underbrace{\text{Var}_{p_0} (\mathbb E_\gamma (\bar f^{(\cdot \cdot k)} \mid \theta^k_0))}_{\text{nonstationary variance}} + \underbrace{\mathbb E_{p_0} (\text{Var}_\gamma (\bar f^{(\cdot \cdot k)} \mid \theta^k_0))}_{\text{persistent variance}}.
\end{align}
Much of the MCMC literature focuses on the final term, with the assumption that we have run the Markov chain's warmup ``long enough'' for them to be approximately stationary, at which point the bias (and the nonstationary variance) become negligible.
The error tolerance can then be expressed in terms of the MCSE and the ESS \citep[e.g.][]{Flegal:2008, Gelman:2013, Vats:2019, Vehtari:2022}.
What constitutes an appropriate ESS is problem-dependent and also subject to academic discussion \citep[e.g.,][]{Mackay:2003, Gelman:2011, Vats:2019, Vehtari:2022, Margossian:2023}.

If the Markov chains are not close to stationarity, variance alone cannot characterize the expected squared error, and so we must first check for approximate convergence. % , for instance with $\widehat R$ or $\nR$.
% Picking a threshold for $\nR$ is equivalent to setting a tolerance $\tau$ on the scaled nonstationary variance
We may do so by setting a tolerance $\tau$ on the (scaled) nonstationary variance,
\begin{equation}
  \frac{\text{Var}_{p_0} \left ( \mathbb E_\gamma (\bar f^{(\cdot \cdot k)} \mid \theta^k_0) \right)}{\mathbb E_{p_0} \left ( \text{Var}_\gamma (\bar f^{(\cdot \cdot k)} \mid \theta^k_0) \right)} \le \tau.
\end{equation}
Since with $\nR$ we measure the total variance rather than the nonstationary variance, we need to correct for the persistent variance.
%by either running more and longer subchains, 
When $N = 1$, we can apply Corollary~\ref{corr:N=1}, which tells us that the scaled persistent variance is exactly $1 / M$.
Using the latter, the threshold on $\nR$ is
\begin{equation}
  \nR \le \sqrt{1 + \frac{1}{M} + \tau}.
\end{equation}
If $N > 1$, we may run a large number of subchains for a long enough number of sampling iterations (perhaps more so than needed to achieve our target ESS) to ``kill'' the contribution of the persistent variance.
The threshold on $\nR$ may then be $\sqrt{1 + \tau}$, though this is likely conservative since the persistent variance doesn't completely vanish.
Alternatively, we could attempt to calculate the persistent variance using an estimator of the ESS, for example based on the Markov chain's autocorrelation, if $N$ is sufficiently large.

The choice of $\tau$ itself, just like the tolerable expected squared error, depends on the problem.
We propose to make the nonstationary variance---and so, by proxy, the squared bias---small next to the tolerable squared error.
Then the error is dominated by the persistent variance and can be characterized by estimators of the MCSE.
We will demonstrate such an approach on a collection of examples.

\section{Numerical experiments} \label{sec:experiments}

\begin{table}
  \begin{tabular}{l r l}
  \rowcolor{gray!20} {\bf Target} & $d$ & {\bf Description} \vspace{2mm} \\
  Rosenbrock & 2 & A joint normal distribution nonlinearly transformed to have\\ 
  & & high curvature \citep{Rosenbrock:1960}. See Equation~\ref{eq:rosenbrock}. This 
  \\ & & target produces Markov chains with a large autocorrelation. \vspace{2mm} \\
  \rowcolor{gray!20}
  Eight Schools & 10 & The posterior for a hierarchical model of the effect of a test-\\
  \rowcolor{gray!20} & & preparation program for high school students \citep{Rubin:1981}. \\
  \rowcolor{gray!20} & & Fitting such a model with MCMC requires a careful  \\
  \rowcolor{gray!20} & & reparameterization \citep{Papaspiliopoulos:2007}.
  \vspace{2mm} \\
  German Credit & 25 & The posterior for a logistic regression applied to a numerical \\ & & version of the German credit data set \citep{Dua:2017}. \vspace{2mm} \\
  \rowcolor{gray!20}
  Pharmacokinetics & 45 & The posterior for a hierarchical model describing the absorption\\ 
  \rowcolor{gray!20} & & of a drug compound in patients during a clinical trial \\ 
  \rowcolor{gray!20} & & \citep[e.g.,][]{Wakefield:1996, Margossian:2022-torsten}, using data \\ 
  \rowcolor{gray!20} & & simulated over 20 patients. This model uses a likelihood based \\
  \rowcolor{gray!20} & &  on an ordinary differential equation. \\
  Bimodal & 100 & An unbalanced mixture of two well-separated Gaussians.
  With \\ 
  & & standard MCMC, each Markov chain ``commits'' to a single mode, \\
  & & leading to bias sampling.
  Even after a long compute time, the \\
  & & Markov chains fail to converge.
  \vspace{2mm} \\
  \rowcolor{gray!20}
  Item Response & 501 & The posterior for a model to assess students' abilities based on\\
  \rowcolor{gray!20} & & test scores \citep{Gelman:2007}. The model is fitted to the \\
  \rowcolor{gray!20} & & response of 400 students to 100 questions, and the model \\ 
  \rowcolor{gray!20} & & estimates (i) the difficulty of each
   question and (ii) each student's. \\
  \rowcolor{gray!20} & & aptitude.
  This problem has a relatively high dimension.
  \end{tabular}
  \caption{Target distributions for our numerical experiments.}
  \label{tab:target}
\end{table}

We demonstrate an MCMC workflow using $\nR$ on a diversity of applications mostly drawn from the Bayesian literature.
Our focus is on producing accurate estimates for the first moment of the model parameters.
We consider six targets which represent a diversity of applications.
Table~\ref{tab:target} summarises the target distributions with more details available in Appendix~\ref{app:models}.
The bimodal example provides a case where the chains fail to converge after a reasonable amount of compute time.
As our MCMC algorithm, we run ChEES-HMC \citep{Hoffman:2021}, which is an adaptive HMC sampler, designed to run efficiently on GPUs.
ChEES-HMC pools information across all Markov chains to set the tuning parameters for HMC, specifically (i) the step size of the integrator solving Hamilton's equations of motion and (ii) the number of steps the integrator takes, thereby determining the length of the Hamiltonian trajectory.
Fixing (ii) across all chains ensures ensures that, at a given iteration, each chain requires the same amount of compute and so the operation can be efficiently parallelized on a GPU.
The algorithm is implemented in \texttt{TensorFlow Probability} \citep{tfp:2023}.

\subsection{Monte Carlo squared error after convergence} \label{sec:model-error}

We construct a model of the squared error for stationary Markov chains, and use this as a benchmark for the empirical squared error.
At stationarity, the subchains within a superchain are no longer correlated.
A central limit theorem may then be taken along the total number of chains.
Then for $N = 1$, the scaled squared error approximatively follows a $\chi^2$ distribution,
\begin{equation} \label{eq:chi}
  E^2 \triangleq \frac{KM}{\text{Var}_p f} (\bar f^{(1 \cdot \cdot)} - \mathbb E_p f)^2 \overset{\text{approx.}}{\sim} \chi^2_1.
\end{equation}
High-precision estimates of $\mathbb E_p f$ and $\text{Var}_p f$ are computed using long MCMC runs (Appendix~\ref{app:models}).
We use the above approximation to jointly model the expected squared error at stationarity across all dimensions.
This is somewhat of a simplification, since we do not account for the correlations between dimensions.

After a sufficiently long warmup phase, the Markov chains are nearly stationary and $E^2$ approximately follows a $\chi^2_1$ distribution.
This should ideally be reflected by $\nR \approx 1$.
However if the warmup phase is too short and the squared error large because of the Markov chain's bias, we expect to see $\nR \gg 1$.

\subsection{Results when running 2048 chains with $K = 16$ superchains}

Suppose we target an ESS of $\sim$2000.
In this case, we may run 2048 chains, broken into $K = 16$ superchains and $M = 128$ subchains.
For each target distribution, we compute $\nR$ using $N = 1$ draw after warmups of varying lengths, \begin{equation*}
  {\bf W} = (10, 20, 30, \dots, 100, 200, 300, \dots, 1000).
\end{equation*}
${\bf W}$ contains both warmup lengths that are clearly too short to achieve convergence and lengths after which approximate convergence is expected when running HMC.
That way, the behavior of $\nR$ can be examined on both nonstationary and (nearly) stationary Markov chains.
At the end of each warmup window, we record $\nR$ for each dimension and the corresponding $E^2$ using a single draw per chain.
We repeat our experiment 10 times for each model.

While helpful for diagnosing convergence, constraining all subchains within a superchain to start at the same location increases the total Monte Carlo variance (Theorem~\ref{thm:nB}).
We empirically investigate the repercusions of this initialization scheme on the mean squared error (MSE).
Overall, we do not see a drastic difference between using a constrained or a naive superchain, especially as the warmup length approaches 1000  iterations (Figure~\ref{fig:MSE}).
For the Item Response model, the MSE decays faster with a naive superchain on average.
For the hierarchical models (Eight Schools and Pharmacokinetics) the MSE decays more slowly with a naive superchain.
This phenomenon is consistent with past observations that for models with an intricate posterior geometry, one or more chains can get stuck in difficult regions of the probability space, usually well in the tails of the target distribution \citep{Hoffman:2021, DuChe:2023}.
The adaptation strategy of ChEES-HMC is to reduce the step size of HMC to insure that Markov chains stuck in difficult regions can still move forward (``no chains left behind'' heuristic).
Now, ChEES-HMC uses a common transition kernel for all chains, and a conservative step size, while helpful for stuck chains, can be suboptimal for other chains that may already have reached regions where the posterior probability mass concentrates.
Unfortunately, increasing the number of distinct initializations increases the probability that at least one chain gets stuck in a difficult region, particularly during the early warmup.

\begin{figure}
    \centering
    \includegraphics[width=5in]{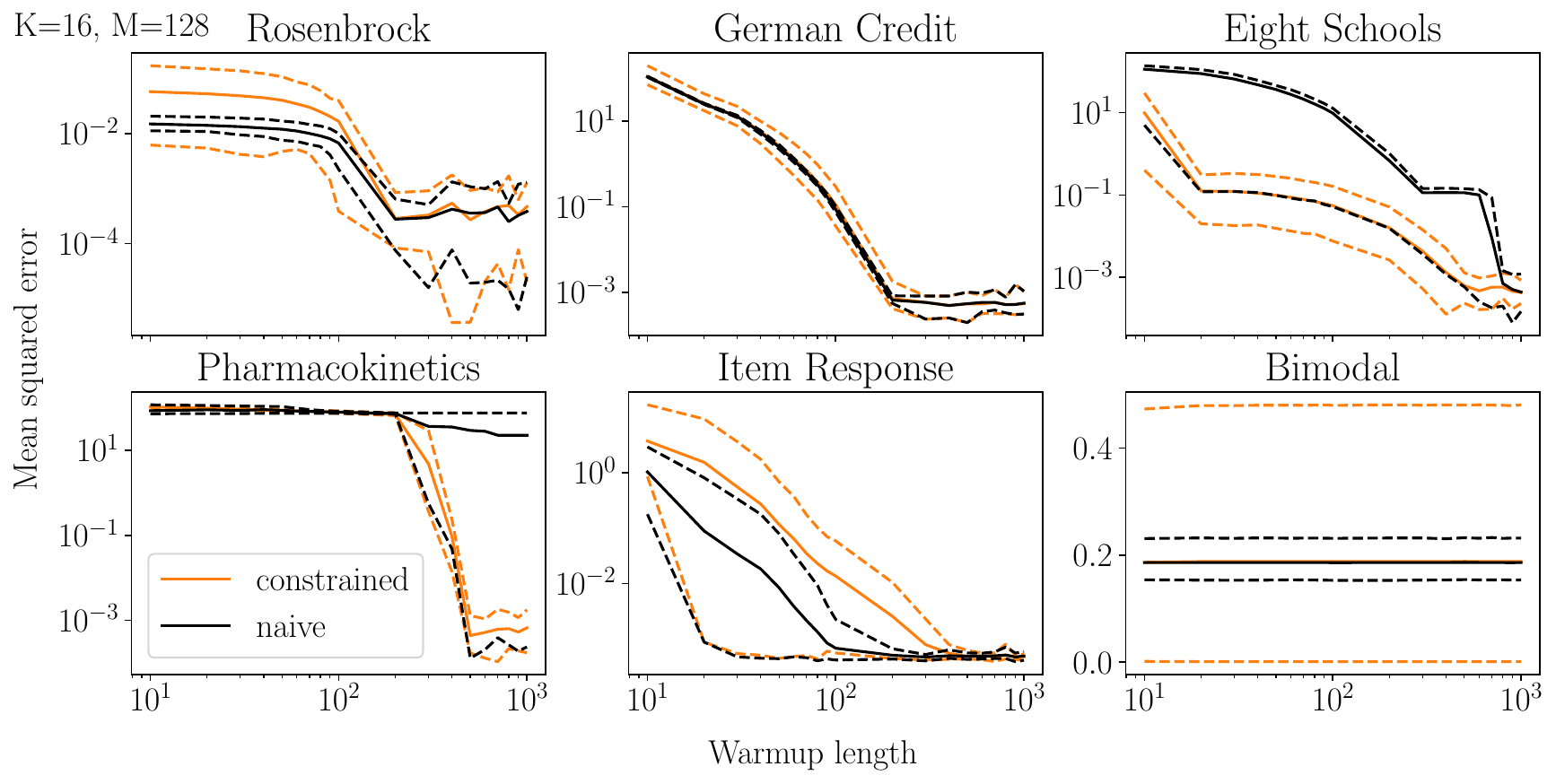}
    \caption{\textit{Mean squared error, scaled by the posterior variance, when using $K=16$ constrained superchains and $K=16$ naive superchains.
    The results are computed across 10 seeds.
    The solid lines show the average MSE, and the dotted lines represent the best and worst runs across 10 seeds.
    }}
    \label{fig:MSE}
\end{figure}

In practice, we cannot measure the MSE and we rely on convergence diagnostics.
Figure~\ref{fig:scatter} plots $E^2$ against $\nR$ for a constrained superchain, and we observe a clear correlation between $\nR$ and $E^2$.
After a short warmup, the chains are far from their stationary distribution: this manifests as both a large $E^2$ and a large $\nR$.
When $\nR$ is close to 1, the squared error is smaller and approaches the distribution we would expect from stationary Markov chains (Section~\ref{sec:model-error}).
For the bimodal target, the Markov chains fail to converge after a warmup of 1000 iterations and our experiment only reports observations where $E^2$ and $\nR$ are both large.

\begin{figure}
    \centerline{
    \includegraphics[width=5in]{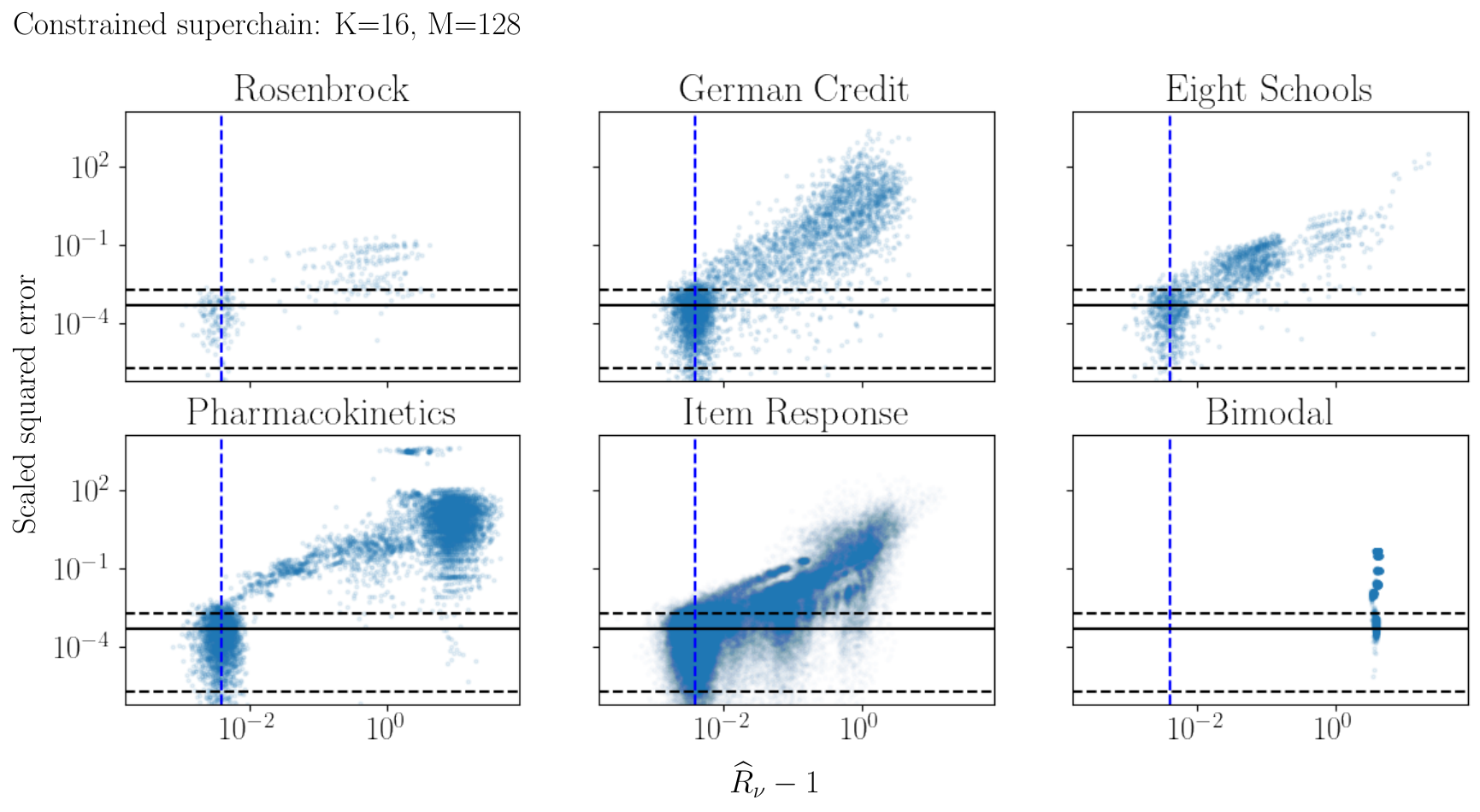}}
    \caption{\textit{Scaled squared error against $\nR$, using $K = 16$, $M = 128$ and $N = 1$, with constrained superchains.
    For short warmup phases, the Markov chains are far from their stationary distribution, which manifests as a large squared error and a large $\nR$.
    Once $\nR$ is close to 1, the scaled squared error behaves as we would expect from stationary Markov chains (Section~\ref{sec:model-error}).
    The vertical blue line corresponds to the proposed threshold $\epsilon \approx 0.004$.
    The horizontal lines show the median (solid line) and 0.9 coverage (dashed lines) for the squared error of a stationary Markov chain.}}
    \label{fig:scatter}
\end{figure}

As a tolerance on the scaled nonstationary variance, we consider $\tau = 10^{-4}$, which corresponds to a fifth of the scaled variance we tolerate with an ESS of 2000.\footnote{
  An ESS of 2000 corresponds to a relative variance, $\text{Var}_\Gamma \bar f^{(\cdot \cdot \cdot)} / \text{Var}_p f = 1 / 2000 = 5 \times 10^{-4}$,
  and the tolerance on the nonstationary variance, scaled by the estimator $\nW$ of $\text{Var}_p f$, is set to a fraction of this quantity.
}
The tolerance for $\nR$ is then $\sim$1.004, which is smaller than the recommended threshold of 1.01 for $\widehat R$.
Bear in mind the tolerance we chose is relative to our target ESS, which can change between problems, and the choice to make $\tau$ a fifth (as opposed to a half or a tenth) is arbitrary. 
Figure~\ref{fig:frac} plots the fraction of times the scaled squared error $E^2$ is above the $0.95^\text{th}$ quantile of a $\chi^2_1$ distribution for $\nR$ below varying thresholds.
For $\nR \le \sqrt{1 + 1 / M + \tau}$, this fraction approaches 0.05 for all models but can be larger. 

\begin{figure}
  \centering{
  \includegraphics[width=4.5in]{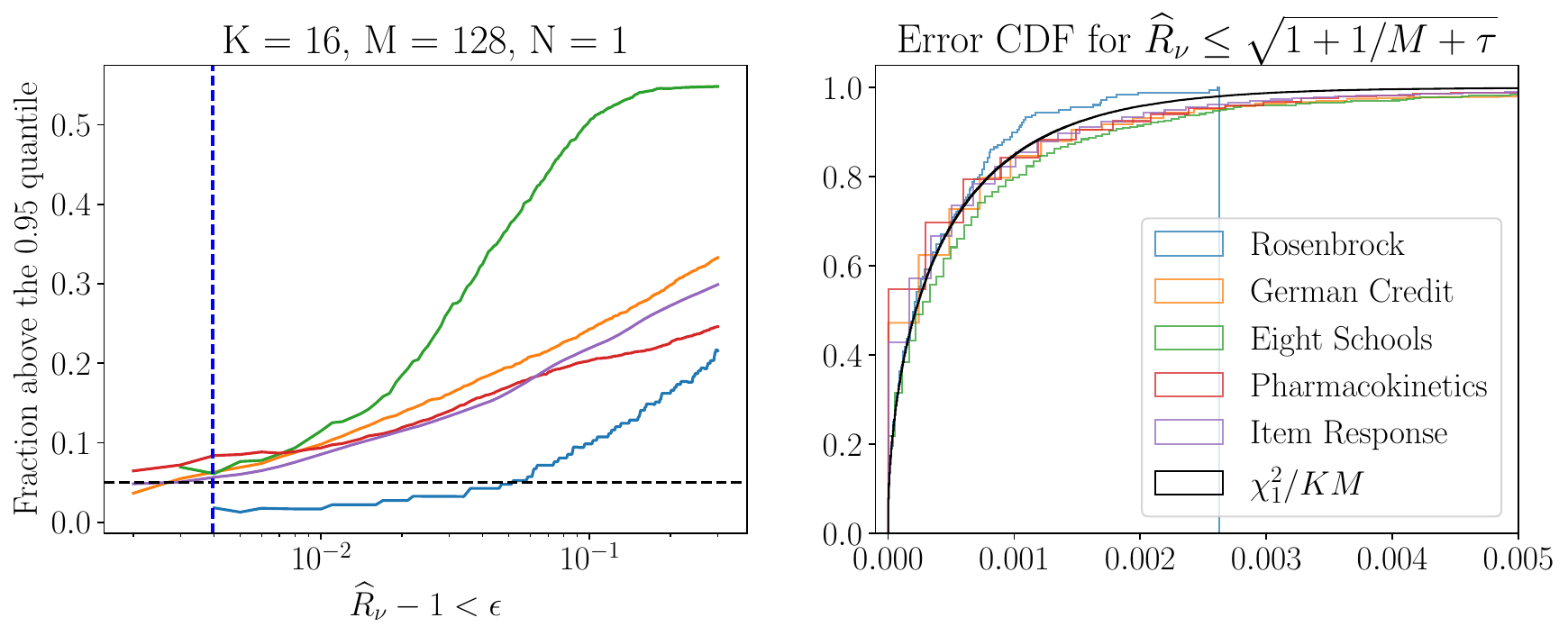}}
  \caption{\textit{(left) Fraction of Monte Carlo estimates with squared error above the $95^\text{th}$ quantile of the stationary error distribution (Section~\ref{sec:model-error}). 
  % The fraction is computed for all $\nR \le 1 + \epsilon$, provided we have at least 100 observations.
  The vertical blue line is the prescribed threshold $\sqrt{1 + 1 / M + \tau}$. % and the yellow line the conventional threshold 1.01.
  (right) For $\nR$ close to 1, the empirical CDF approaches the theoretical CDF for stationary Markov chains.}}
  \label{fig:frac}
\end{figure}

Finally, we examine what would happen if we compute $\nR$ with naive superchains, that is without constraining all the subchains to start at the same point (Figure~\ref{fig:scatter-naive}).
In this setting, $\nR$ hardly correlates with $E^2$, and there is no threshold of $\nR$ under which the scaled squared error follows the distribution we expect from stationary Markov chains.  

\begin{figure}
    \centering
    \includegraphics[width=5in]{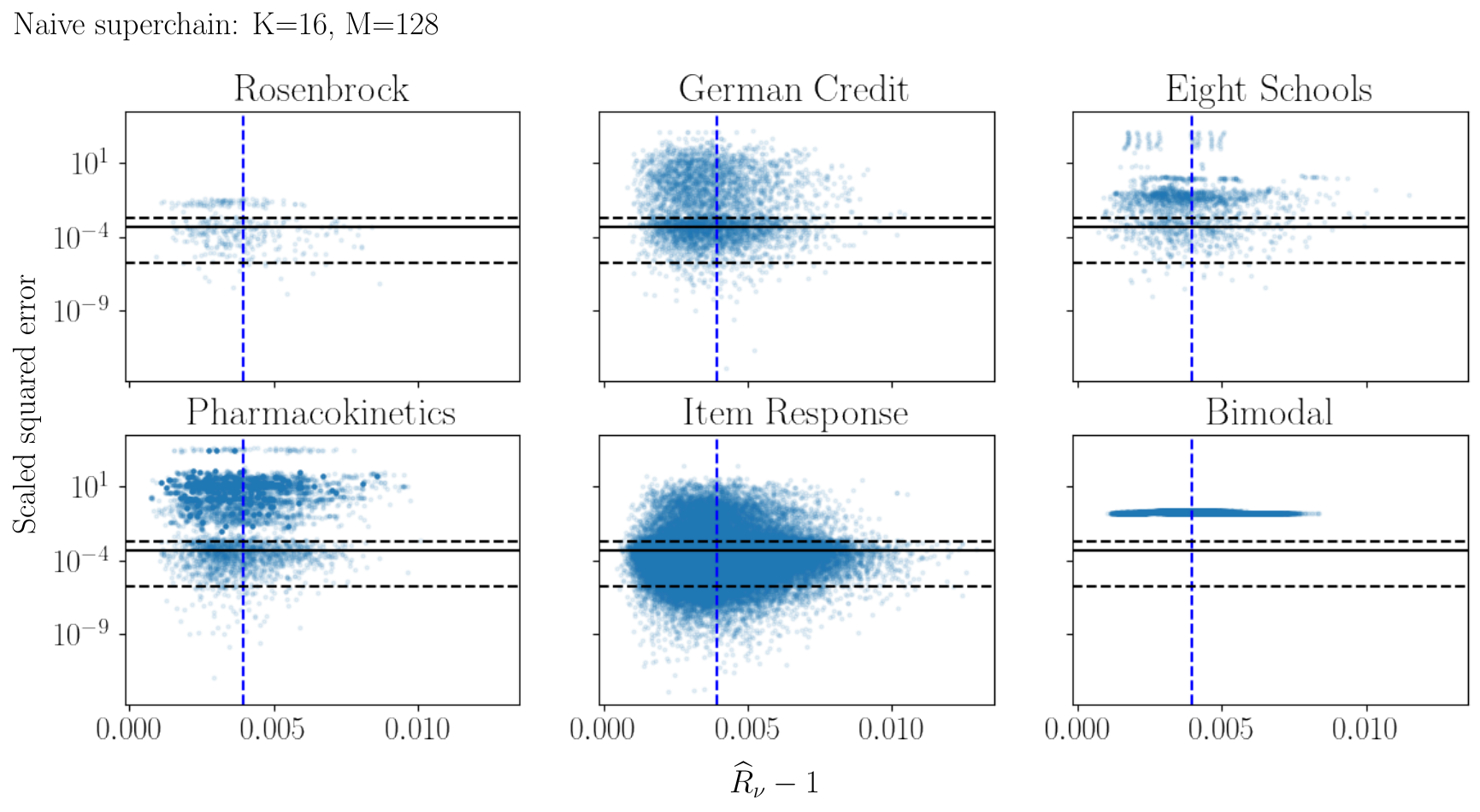}
    \caption{
    \textit{Scaled squared error against $\nR$, using $K = 16$, $M=128$, and $N = 1$, with a naive superchain. Without the constrained initialization, there is no useful correlation between $\nR$ and the squared error.}}
    \label{fig:scatter-naive}
\end{figure}

\subsection{Results when varying the number of superchains $K$}

We extend the analysis in the previous section to the case where the total number of chains stays fixed at $KM = 2048$, but we vary the number of superchains $K$.
As before, we first examine the scaled MSE, and find no clear disadvantage to using constrained superchains (Figure~\ref{fig:mse_all}).
For the Rosenbrock, German Credit, and Item Response models, increasing the number of distinct initializations improves, to varying degrees, the rate at which the MSE decreases.
For the hierarchical models (Eight Schools and Pharmacokinetics), there is no clear trend.

\begin{figure}
    \centering
    \includegraphics[width=5in]{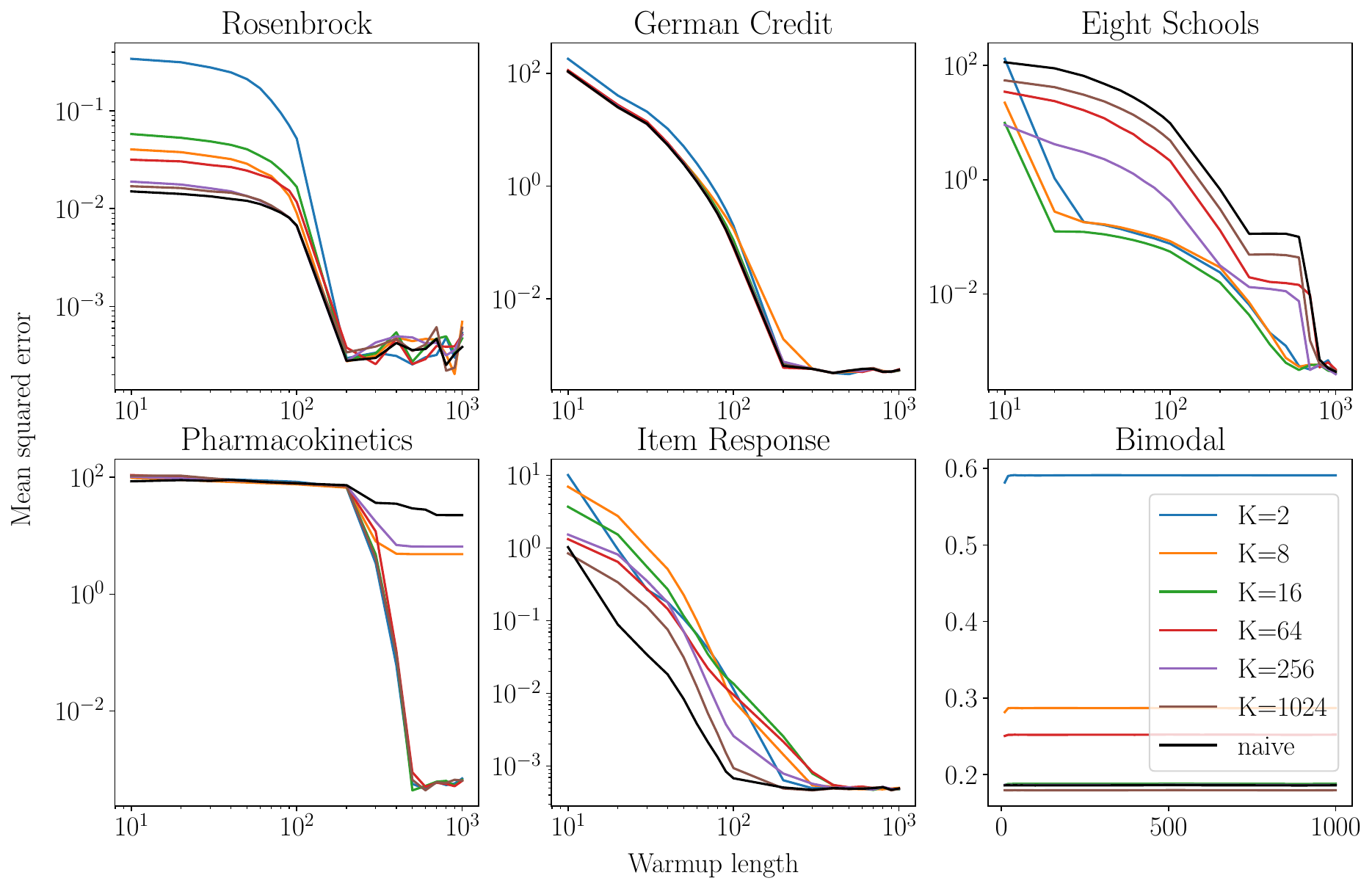}
    \caption{\textit{Mean squared error, scaled by the posterior variance, when using $K$ constrained superchains for a total of 2048 chains or using 2048 independently initialized chains (naive setting).
    The MSE is averaged across 10 seeds.}}
    \label{fig:mse_all}
\end{figure}

We next examine whether $\nR$ can diagnose convergence of the Markov chains (Figure~\ref{fig:frac_all}).
The threshold for $\nR$ is adjusted as $M$ varies, although the tolerance on the nonstationary variance $\tau$ remains fixed.
For $K \in \{8, 16, 64, 256\}$ we find that the $95^\text{th}$ quantile of $E^2$ is in reasonable agreement with the 95$^\text{th}$ quantile of a $\chi^2_1$, albeit slightly larger.
The results are less stable for the extreme choices $K = 2$ and $K = 1024$.
For $K = 2$ we expect the variance of $\nR$ to be large during the early stages of MCMC,
while for $K = 1024$, the variance is high near stationarity
(Section~\ref{sec:variance}).
Overall, there is a broad range of choices for $K$ away from these extremes that yield a functioning convergence diagnostic in the considered examples.

\begin{figure}
\centerline{  \includegraphics[width=5in]{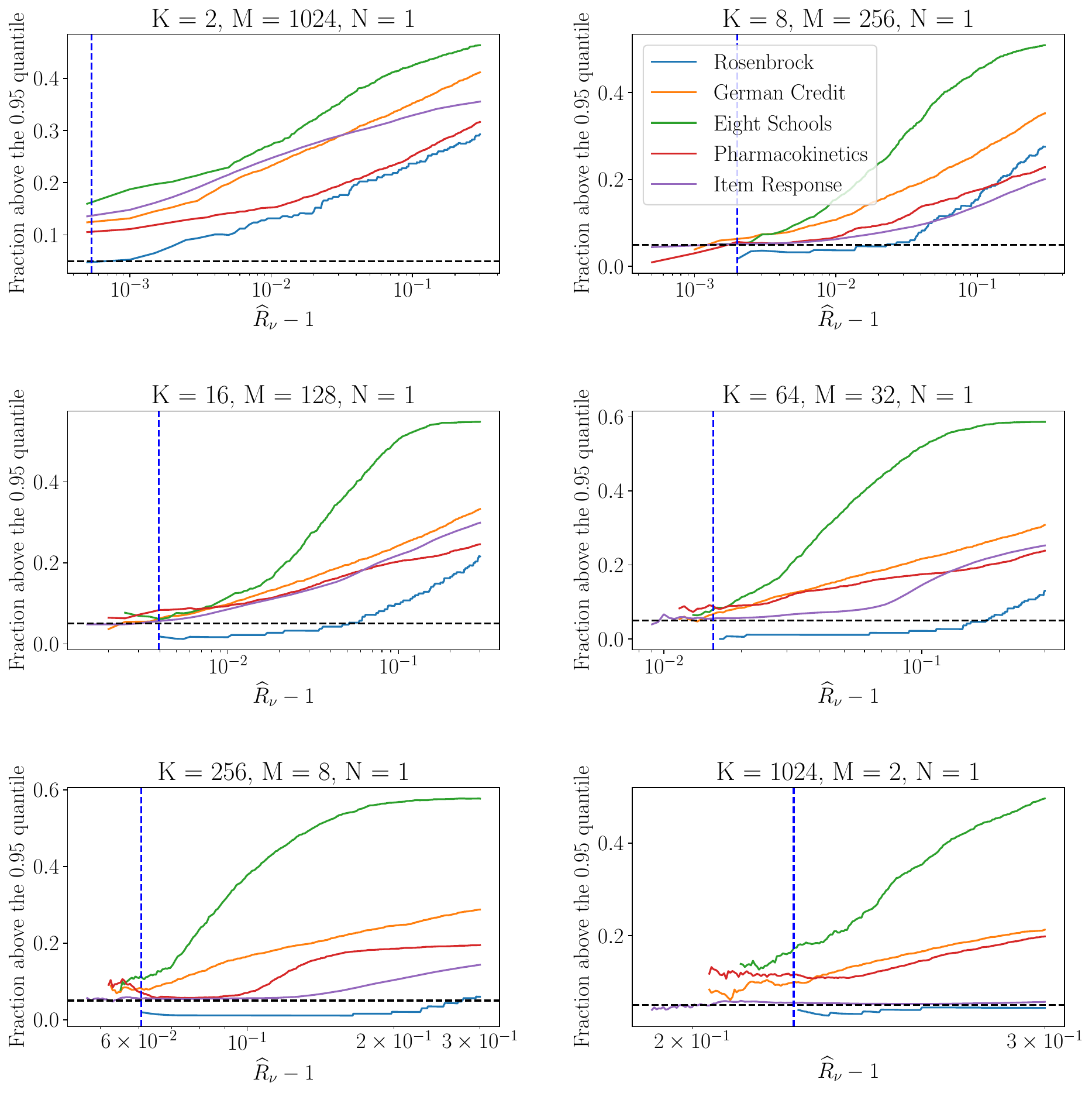}}
  \caption{\textit{Fraction of Monte Carlo estimates with squared error above the 95th quantile of the stationary error distribution (Section~\ref{sec:model-error}) when varying $K$.
  For $K$ between 8 and 256, the fraction approaches the expected 0.05 (horizontal dotted line) for stationary Markov chains, past the prescribed threshold (vertical blue line).}
  }
  \label{fig:frac_all}
\end{figure}

\section{Discussion}

While CPU clock speeds stagnate, parallel computational resources continue to get cheaper. The question of how to make effective use of these parallel resources for MCMC remains an outstanding challenge. 
This paper tackles the problem of assessing approximate convergence.

We have proposed a new convergence diagnostic, $\nR$, which is straightforward to implement for a broad class of MCMC algorithms and works for both long and short chains, in the sense that $\widehat R$ works for long chains.
A small $\nR$ (or $\widehat R$) does not guarantee convergence to the stationary distribution or that the bias has decayed to 0.
Still $\widehat R$ has empirically proven its usefulness in applied statistics, machine learning, and many scientific disciplines.
Our analysis reveals that potential success (or failure) of $\nR$ and $\widehat R$ is best understood by studying (i) the relation between the nonstationary variance and the squared bias, and (ii) how well $\nR$ monitors the nonstationary variance.

In addition to working in the many-short-chains regime, $\nR$ provides more guidance to choose a threshold, notably in the $N = 1$ case (Corollary~\ref{corr:N=1}).
%A less conventional option—but compelling in certain machine learning settings—is to use
%a lenient tolerance on nonstationary variance to enable fast biased MCMC, as a way to
%reproduce the behavior of variational inference \citep{Hoffman:2020}.
%
Unlike $\widehat R$, the proposed $\nR$ requires a partition of the chains into superchains.
No choice of partition uniformly minimizes the variance of $\nR$ during all phases of MCMC. 
Based on our numerical experiments, we recommend using $K$ between 8 and 256 initializations, when running $KM = 2048$ parallel chains.
A cost of using constrained superchains is that it increases the variance of Monte Carlo estimators, although the added variance vanishes as the Markov chains converge.
A possible alternative may be to initialize all chains independently and, when constructing superchains for the computation of $\nR$, cluster chains together based on their initialization.
Studying the benefits of such an approach (both theoretically and empirically) is left as future work.

% \subsection{Future directions}

The nesting design we introduce opens the prospect of generalizing other variations on $\widehat R$, including multivariate $\widehat R$ \citep{Brooks:1998, Vats:2021, Moins:2022},
rank-normalized $\widehat R$ \citep{Vehtari:2021} and local $\widehat R$ \citep{Moins:2022}.
Nesting can further be used for less conventional convergence diagnostics, such as $R^*$, which uses classification trees to compare different chains \citep{Lambert:2022}.

A direction for future work is to adaptively set the warmup length using $\nR$.
This would follow a long tradition of using diagnostics to do early stopping of MCMC \citep{Geweke:1992, Cowles:1996, Cowles:1998, Jones:2006, Zhang:2020}.
Still, standard practice remains to prespecify the warmup length. 
This means the warmup length is rarely optimal, which is that much more exasperating in the many-short-chains regime, where the warmup dominates the computation.

%Another important direction for future work is to construct ESS estimators which work well in the many-short-chains regime.
%Such estimators could be used to determine if the sampling phase is long enough and would complement $\nR$, which is used to determine if the warmup phase is sufficiently long.

% \begin{acks}[Acknowledgments]

\section*{Acknowledgments}

We thank the \texttt{TensorFlow Probability} team at Google, especially Alexey Radul.
We also thank Marylou Gabri\'e and Sam Livingstone for helpful discussions; Rif A. Saurous, Andrew Davison, Owen Ward, Mitzi Morris, and Lawrence Saul for helpful comments on the manuscript; and the U.S. Office of Naval Research and Research Council of Finland for partial support. 
Much of this work was done while CM was at Google Research and in the Department of Statistics at Columbia University and while LRD was in the Department of Statistics at the University of Warwick.
Finally, we thank three anonymous reviewers for their comments which have helped us improve the manuscript.
LRD was supported by the EPSRC grant EP/R034710/1. 

% \end{acks}

% \begin{funding}[Funding]

% \end{funding}

% \section{Acknowledgments}

% We thank the \texttt{TensorFlow Probability} team at Google, especially Alexey Radul.
% We also thank Marylou Gabri\'e and Sam Livingstone for helpful discussions; Rif A. Saurous, Andrew Davison, Owen Ward, Mitzi Morris, and Lawrence Saul for helpful comments on the manuscript; and the U.S. Office of Naval Research and Research Council of Finland for partial support. 
% LRD was supported by the EPSRC grant EP/R034710/1. 

%% ** Mainmatter **

%\section{}\label{}

% \begin{figure} 
% \includegraphics{<eps-file>}% place <eps-file> in ./img  subfolder
% \caption{}
% \label{}
% \end{figure}

% \begin{table} 
% *****************
% \begin{tabular}{lll}
% \end{tabular}
% *****************
% \caption{}
% \label{}
% \end{figure}

%%%%%%%%%%%%%%%%%%%%%%%%%%%%%%%%%%%%%%%%%%%%%%
%% Supplementary Material, if any, should   %%
%% be provided in {supplement} environment  %%
%% with title and short description.        %%
%%%%%%%%%%%%%%%%%%%%%%%%%%%%%%%%%%%%%%%%%%%%%%
%\begin{supplement}
%\stitle{???}
%\sdescription{???.}
%\end{supplement}

%% ** The bibliograhy **

% ** Acknowledgements **
%\begin{acks}[Acknowledgments]
%\end{acks}

% \appendix

% \begin{center}
%   {\Large \bf Appendix}
% \end{center}

\appendix
\section{Proofs} \label{app:proofs}

Here we provide the proofs for formal statements throughout the paper.

\subsection{Proofs for Section~\ref{sec:nRhat}: ``Nested $\widehat R$''}

\subsubsection{Proof of Lemma~\ref{lemma:nW}: Asymptotic limit of $\nW$}

\begin{proof}
Recall the superchains are independent.
Then applying the law of large numbers along $K$ yields,
\begin{eqnarray*}
  \nW \ \xrightarrow[K \to \infty]{a.s} \ \mathbb E_\Gamma \tilde B_k + \mathbb E_\Gamma \tilde W_k.
\end{eqnarray*}
Now
\begin{eqnarray*}
  \mathbb E_\Gamma \tilde B_k & = & \frac{1}{M - 1} \sum_{m = 1}^M \mathbb E_\Gamma \left ( \bar f^{(\cdot m k)} - \bar f^{(\cdot \cdot k)} \right)^2 \\
     & = & \frac{1}{M - 1} \sum_{m = 1}^M \left ( \mathbb E_\Gamma [\bar f^{(\cdot mk)}]^2 + \mathbb E_\Gamma [\bar f^{(\cdot \cdot k)}]^2 - 
  2 \mathbb E_\Gamma (\bar f^{(\cdot m k)} \bar f^{(\cdot \cdot k)})  \right),
\end{eqnarray*}
and
\begin{eqnarray*}
  \sum_{m = 1}^M \mathbb E_\Gamma (\bar f^{(\cdot mk)} \bar f^{(\cdot \cdot k)})
    & = & M \frac{1}{M} \sum_{m = 1}^M \mathbb E_\Gamma (\bar f^{(\cdot m k)} \bar f^{(\cdot \cdot k)}) \\
    & = & M \mathbb E_\Gamma \left (\frac{1}{M} \sum_{m =1}^M \bar f^{(\cdot m k)} \bar f^{(\cdot \cdot k)} \right) \\
    &=  & M \mathbb E_\Gamma [\bar f^{(\cdot \cdot k)}]^2.
\end{eqnarray*}
Plugging this back in yields,
\begin{eqnarray*}
  \mathbb E_\Gamma \tilde B_k & = & \frac{1}{M - 1} \sum_{m = 1}^M \left ( \mathbb E_\Gamma [\bar f^{(\cdot m k)}]^2 - \mathbb E_\Gamma [\bar f^{(\cdot \cdot k)}]^2 \right)  \\
  & = & \frac{1}{M - 1} \sum_{m = 1}^N \left (\text{Var}_\Gamma \bar f^{(\cdot mk)} + [\mathbb E_\Gamma \bar f^{(\cdot m k)}]^2 - \text{Var}_\Gamma \bar f^{(\cdot \cdot k)} - [\mathbb E_\Gamma \bar f^{(\cdot \cdot k)}]^2 \right)  \\
  & = & \frac{1}{M - 1} \sum_{m = 1}^N \left [\text{Var}_\Gamma \bar f^{(\cdot m k)} - \text{Var}_\Gamma \bar f^{(\cdot \cdot k)} \right] + \left [(\mathbb E_\Gamma \bar f^{(\cdot m k)})^2 - (\mathbb E_\Gamma \bar f^{(\cdot \cdot k)})^2 \right].
\end{eqnarray*}
The second term vanishes, because $\mathbb E_\Gamma \bar f^{(\cdot m k)} = \mathbb E_\Gamma \bar f^{(\cdot \cdot k)}$.
Then
\begin{eqnarray*}
  \mathbb E \tilde B_k & = & \frac{1}{M - 1} \sum_{m = 1}^M \text{Var}_\Gamma \bar f^{(\cdot mk)} - \text{Var}_\Gamma \bar f^{(\cdot \cdot k)} \\
  & = & \frac{M}{M - 1} \left (\text{Var}_\Gamma \bar f^{(\cdot mk)} - \text{Var}_\Gamma \bar f^{(\cdot \cdot k)} \right),
\end{eqnarray*}
We next apply the law of total variance:
\begin{eqnarray*}
   \text{Var}_\Gamma \bar f^{(\cdot mk)} & = & \mathbb E_{p_0} \text{Var}_\gamma (\bar f^{(\cdot mk)} \mid \theta^k_0)
    + \text{Var}_{p_0} \mathbb E_\gamma (\bar f^{(\cdot mk)} \mid \theta^k_0),
\end{eqnarray*}
and
\begin{eqnarray*}
   \text{Var}_\Gamma \bar f^{(\cdot \cdot k)} & = & \mathbb E_{p_0} \text{Var}_\gamma (\bar f^{(\cdot \cdot k)} \mid \theta^k_0)
    + \text{Var}_{p_0} \mathbb E_\gamma (\bar f^{(\cdot \cdot k)} \mid \theta^k_0) \\
    & = & \frac{1}{M} \mathbb E_{p_0} \text{Var}_\gamma (\bar f^{(\cdot m k)} \mid \theta^k_0)
    + \text{Var}_{p_0} \mathbb E_\gamma (\bar f^{(\cdot m k)} \mid \theta^k_0),
\end{eqnarray*}
where the second line follows from noting that, conditional on $\theta_0^k$, the chains are independent,
and that $\mathbb E_\gamma (\bar f^{(\cdot \cdot k)} \mid \theta^k_0) = \mathbb E_\gamma (\bar f^{(\cdot m k)} \mid \theta^k_0)$.
Plugging this result back, we get
\begin{eqnarray*}
  \mathbb E \tilde B_k & = &  \frac{M}{M - 1} \left [ \mathbb E_{p_0} \text{Var}_\gamma (\bar f^{(\cdot m k)} \mid \theta^k_0)
   - \frac{1}{M} \mathbb E_{p_0} \text{Var}_\gamma (\bar f^{(\cdot m k)} \mid \theta^k_0) \right] \\
   & = & \mathbb E_{p_0} \text{Var}_\gamma (\bar f^{(\cdot m k)} \mid \theta^k_0).
\end{eqnarray*}
Next, if $N = 1$, $\tilde W  = 0$.
If $N > 1$, we obtain, following a similar approach as above,
\begin{eqnarray*}
  \tilde W_k & = & \frac{1}{M} \sum_{m = 1}^M \frac{1}{N - 1} \sum_{n = 1}^N \left (f^{(nmk)} - \bar f^{(\cdot mk)} \right )^2 \\
  & = & \frac{1}{M} \sum_{m = 1}^M \frac{1}{N - 1} \sum_{n = 1}^N \left (f^{(nmk)} \right )^2 - \left (\bar f^{(\cdot mk)} \right)^2.
\end{eqnarray*}
Then, taking expectations and expanding the square inside the expectation,
\begin{equation}  \label{eq:W'}
  \mathbb E_\Gamma \tilde W_k = \begin{cases}
    \frac{1}{N - 1} \sum_{n = 1}^N \text{Var}_\Gamma f^{(nmk)} - \text{Var}_\Gamma \bar f^{(\cdot mk)}
       + (\mathbb E_\Gamma f^{(nmk)})^2 - (\mathbb E_\Gamma f^{(\cdot mk)})^2 & \ \text{if} \ N > 1 \\
       0 & \ \text{if} \ N = 1,
  \end{cases}
\end{equation}
with the right side corresponding to our definition of $W'$.
Thus
  \begin{equation*}
    \W = \mathbb E_{p_0} \text{Var}_\gamma (\bar f^{(\cdot mk)} \mid \theta^k_0)+ W',
  \end{equation*}
as desired.
\end{proof}

\begin{remark}
  The second term on the right side of \cref{eq:W'} is a drift term: the (expected) sample variance increases because the samples do not have the same mean.
\end{remark}

\subsubsection{Proof of Corollary~\ref{thm:lower-nR}: stationary lower bound for $\nR$}

\begin{proof} 
%   For samples collected at stationarity ($\mathcal W \to \infty$), the chains have ``forgotten'' their initialization,
% that is $\text{Var}_{p_0} \mathbb E_\gamma (\bar f^{(\cdot mk)} \mid \theta^k_0) = 0$.
% Thus 
% \begin{equation*}
%   \mathbb E_\Gamma \tilde B_k = \mathbb E_{p_0} \text{Var}_\gamma (\bar f^{(\cdot mk)} \mid \theta^k_0) = \text{Var}_\Gamma \bar f^{(\cdot mk)}.
% \end{equation*} 
% %
% Also, because the samples are now all identically distributed, the drift term in $\mathbb E_\Gamma \tilde W$ goes to 0.
For independent chains drawn at stationarity, we have for any $n$ that $\text{Var} f^{(nmk)} = \sigma^2$, where the constant $\sigma^2$ is the variance of the stationary distribution, $p$.
Furthermore, due the chain's positive autocorrelation,
\begin{equation}
  \text{Var}_\Gamma \bar f^{(\cdot mk)} \ge \frac{\sigma^2}{N}.
\end{equation}
Thus
\begin{equation*}
  \W \le \frac{1}{N - 1} \text{Var}_\Gamma \bar f^{(\cdot m k)} + \frac{N}{N - 1} \sigma^2.
\end{equation*}
Then
\begin{equation*}
  \frac{\B}{\W} \ge \frac{\text{Var}_\Gamma \bar \theta^{(\cdot mk)}}{M \left (\frac{1}{N - 1} \text{Var}_\Gamma \bar \theta^{(\cdot m k)} + \frac{N}{N - 1} \sigma^2 \right)}.
\end{equation*}
Noting that for stationary chains $\text{ESS}_{(1)} = \sigma^2 / \text{Var}_\Gamma \bar \theta^{(\cdot mk)}$,
\begin{equation*}
  \frac{\B}{\W} \ge \frac{1}{M \left ( \frac{1 + N \mathrm{ESS}_{(1)}}{N - 1} \right)} = \frac{1}{M} \frac{1 - 1 / N}{\text{ESS}_{(1)} + 1 / N}.
\end{equation*}
\end{proof}

\subsubsection{Proof of Corollary~\ref{corr:N=1}: correction for persistent variance when $N = 1$}

The proof of Corollary~\ref{corr:N=1} follows from Theorem~\ref{thm:nB} and Lemma~\ref{lemma:nW}. \\

\begin{proof}
When $N = 1$, $\tilde W_k = 0$.
Thus $\W = \mathbb E_{p_0} \text{Var}_\gamma (\bar f^{(\cdot mk)} \mid \theta^k_0)$ and
\begin{eqnarray*}
  \frac{\B}{\W} & = & \frac{\mathbb E_{p_0} \text{Var}_\gamma (\bar f^{(\cdot mk)} \mid \theta^k_0)}{M \mathbb E_{p_0} \text{Var}_\gamma (\bar f^{(\cdot mk)} \mid \theta^k_0)} +
  \frac{\text{Var}_{p_0} \mathbb E_\gamma (\bar f^{(\cdot mk)} \mid \theta^k_0)}{\mathbb E_{p_0} \text{Var}_\gamma (\bar f^{(\cdot mk)} \mid \theta^k_0)}  \\
  & = & \frac{1}{M} + \frac{\text{Var}_{p_0} \mathbb E_\gamma (\bar f^{(\cdot mk)} \mid \theta^k_0)}{\mathbb E_{p_0} \text{Var}_\gamma (\bar f^{(\cdot mk)} \mid \theta^k_0)}.
\end{eqnarray*}
Finally, $\bar f^{(\cdot mk)} = f^{(1mk)}$, given that $N = 1$.
\end{proof}

% \subsection{Proofs for Section~\ref{sec:reliable}: ``Reliability of $\nR$''}

\subsubsection{Proof of Theorem~\ref{lemma:langevin}: bias and $\B / \W$ in continuous limit}

We prove Theorem~\ref{lemma:langevin} which gives us an exact expression for the bias and the ratio $\B / \W$.
The Monte Carlo estimator $\bar X_T$ is the average of $M$ diffusion processes evaluated at time $T$.
The processes are initialized at the same point $X_0 \sim p_0$ but then run independently, according to the Langevin diffusion process targeting $p = \text{normal}(\mu, \sigma)$. \\

\begin{proof}
Let $\Psi$ denote the stochastic process which generates $X_T$, and further break this process into (i) $p_0$, the process which draws $x_0$ and (ii) $\psi$ the process which generates $X_T$ conditional on $x_0$. 
Following the same arguments as in the previous sections of the Appendix, we leverage Corollary~\ref{corr:N=1},
\begin{equation*}
  \frac{\B}{\W} = \frac{1}{M} + \frac{\mathrm{Var}_{p_0} \mathbb E_\psi (X_T \mid X_0)}{\mathbb E_{p_0} \mathrm{Var}_\psi(X_T \mid X_0)}.
\end{equation*}
It is well known that Ornstein-Uhlenbeck SDEs, such as the Langevin diffusion SDE targeting a Gaussian, admit explicit solutions; see \citet{Gardiner:2003}. The solution of \cref{eq:sde} given $X_0=x_0$ is given for $T>0$ by
%$$
%X_t=e^{-t/\sigma^2}x_0+\mu(1-e^{-t/\sigma^2})+\sqrt{2}\int_0^te^{-(t-s)/\sigma^2}\dd W_s
%$$
\begin{equation}\label{eq:langevin_solution}
X_T=e^{-T}x_0+\mu(1-e^{-T})+\sqrt{2\sigma}\int_0^Te^{-(T-s)}\dd W_s.
\end{equation}
Now, integrating with respect to $X_0\sim p_0=\text{normal}(\mu_0, \sigma_0)$ yields
\begin{equation*}
  \mathbb E_{p_0} \mathbb E_\psi (X_T \mid X_0) = \mu_0 e^{-T} + \mu(1 - e^{-T}),
\end{equation*}
then
\begin{equation*}
  \mathrm{Var}_{p_0} \mathbb E_\psi (X_T \mid X_0) = \mathrm{Var}_{p_0} \left (X_0 e^{-T} + \mu(1 - e^{-T}) \right)
  = e^{-2 T} \sigma_0^2,
\end{equation*}
and
\begin{equation*}
\mathbb E_{p_0} \mathrm{Var}_\psi (X_T \mid X_0) \\
  = \mathbb E_{p_0} \sigma^2 (1 - e^{- 2T})
  = \sigma^2 (1 - e^{-2T}),
\end{equation*}
from which the desired result follows.
\end{proof}

\subsubsection{Proof of Corollary~\ref{thm:langevin}: initialization conditions under which $\nR$ is reliable}

Theorem~\ref{thm:langevin} provides conditions in the continuous limit under which $\nR$ is $(\delta, \delta')$-reliable and formalizes the notion of overdispersed intializations. \\

\begin{proof}
  The bias is given by
\begin{equation*}
\mathbb E (X_T) - \mathbb E_{p} (\mu) = (\mu_0 - \mu) e^{-T},
\end{equation*}
which is a monotone decreasing function of $T$.
The time at which the scaled squared bias is below $\delta'$ is obtained by solving
\begin{eqnarray*}
  \frac{(\mu_0 - \mu)^2 e^{- 2T}}{\sigma^2} \le \delta'.
\end{eqnarray*}
If $(\mu_0 - \mu)^2 / \sigma^2 \le \delta'$, then the above condition is verified for any $T$ and $\nR$ is trivially $(\delta, \delta')$-reliable.
Suppose now that $(\mu_0 - \mu)^2 / \sigma^2 > \delta'$.
Then we require
\begin{equation*}
  T \ge \frac{1}{2} \log \left (\frac{(\mu_0 - \mu)^2}{\delta' \sigma^2} \right) \triangleq T^*.
\end{equation*}
It remains to ensure that for $T < T^*$, $\B / \W > \delta$.
Plugging in $T^*$ in the expression from Lemma~\ref{lemma:langevin} and noting $\B / \W$ is monotone decreasing in $T$, we have
\begin{equation*}
  \sigma^2_0 > \left (\delta - \frac{1}{M} \right) \left (e^{2 T^*} - 1 \right) \sigma^2,
\end{equation*}
which is the wanted expression.
\end{proof}

% \begin{remark}
%   If $\delta < 1 / M$, then the condition $\nB / \nW < \delta$ cannot be verified.
%   Hence $\nR$ is reliable in a trivial sense: we never erroneously claim convergence because we never claim convergence.
% \end{remark}

\section{Additional results on the reliability of $\nR$ and $\widehat R$} \label{app:reliability}

This appendix provides additional results on the reliability of $\nR$ (Section~\ref{sec:reliable-theory}).
We numerically test the lower bound on the initial variance, provided by Corollary~\ref{thm:langevin}, on a Gaussian target and a mixture of two Gaussians.
We then derive a reliability condition for $\widehat R$, tackling the $N > 1$ case where we have more than one sample per chain.

\subsection{Numerical evaluation of the reliability of $\nR$}

We numerically evaluate the $(\delta, \delta')$-reliability of $\nR$ on two examples:
\begin{itemize}
    \item[](i) a standard Gaussian, which conforms to the assumption of our theoretical analysis.
    \item[](ii) a mixture of two Gaussians, which constitutes a canonical example where $\widehat R$ potentially fails.
    In this example the Markov chains fail to mix, hence reliability is achieved if $\nR \ge \sqrt{1 + \delta}$.
\end{itemize}

We approximate the Langevin diffusion using the Metropolis adjusted Langevin algorithm (MALA) algorithm, which is equivalent to HMC with a one-step leapfrog integrator.
The step size is 0.04, which is chosen to be as small as possible while ensuring that $\nR$ reports convergence after $\sim$$2 \times 10^4$ iterations for the standard Gaussian target.
We set $M = 16$, $\delta \approx 0.1$ and $\delta' = \delta / 5$.
Reliability is defined in terms of $\B$ and $\W$, which are the asymptotic limits of $\nB$ and $\nW$ when $K \to \infty$.
We approximate this limit by using $K = 1024$ superchains for a total of 16384 chains.

\begin{figure}
    \centering
    \includegraphics[width=4.5in]{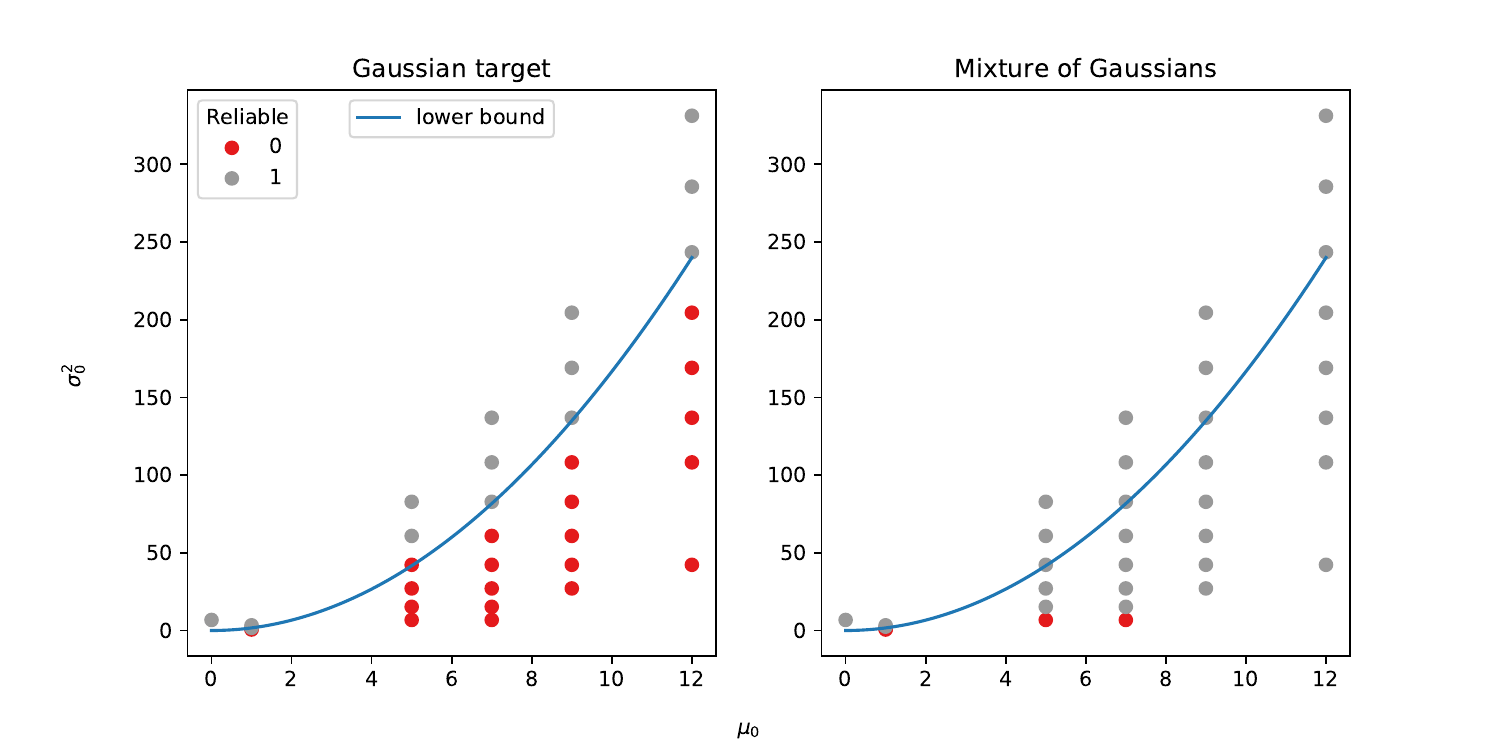}
    \caption{\em $(\delta, \delta')$-reliability of $\nR$ for varying initial bias $\mu_0$ and initial variance $\sigma_0^2$.
    }
    \label{fig:reliable}
\end{figure}

The results are shown in Figure~\ref{fig:reliable}.
The theoretical lower bound (Theorem~\ref{thm:langevin}) is accurate when using a standard Gaussian target.
When targeting a mixture of Gaussians, the lower bound is too conservative and $\nR$ is reliable even when using an ``underdispersed'' initialization.
This is because we use a large number of distinct initializations and, even with a small initial variance, we typically find both modes and identify poor mixing.
This suggests the failure of $\widehat R$ on multimodal targets is often due to using too few distinct initializations, rather than an underdispersed initialization.

\subsection{Reliability condition for $\widehat R$ in the continuous limit}
\label{app:N>1}

We here conduct an analysis for $\widehat R$ similar to the one conducted for $\nR$ in Section~\ref{sec:reliable-theory}.
That is we examine the reliability of $\widehat R$ when approximating the MCMC chain by a Langevin diffusion which targets a Gaussian.

There are three differences when compared to our study of $\nR$: (i) each Monte Carlo estimator is now made up of only one chain and all chains are independent, (ii) the chains do not include warmup, i.e., $\mathcal{W}=0$, and (iii) the length of each chain is chosen as $N =\lfloor T/h\rfloor$ for a small value step size $h$ such that for each chain $m$, the distribution of $\theta^{(\lfloor t/h\rfloor m)}$ can be approximated by the Langevin solution defined in \eqref{eq:langevin_solution}.
The distribution of each (within chain) estimator can therefore be approximated by the distribution of
\begin{equation}
    \overline{X}_T = \frac{1}{T} \int_0^T X_s \text ds.
\end{equation}
In this framework, the limits of $B$ and $W$ as $h\rightarrow0$ yield
$$
 B  = \mathrm{Var}_\Psi (\bar X_T),\qquad W =  \frac{1}{T} \int_0^T \mathbb{E}\left[\left (X_t - \bar X_T \right)^2\right].
$$
This next lemma provides an exact expression for $B / W$.
\begin{lemma} \label{lemma:langevin-rhat}
  Suppose we initialize a process at $X_0 \sim p_0$, which evolves according to \eqref{eq:sde} from time $t = 0$ to $t = T > 0$, and let $\bar X_T$ be defined as above.
  Denote its distribution as $\Psi$.
  Let
  \begin{eqnarray*}
      \rho_T & \triangleq & \frac{1}{T} (1 - e^{-T}), \\
      \xi_T & \triangleq &  \frac{1}{2T} \left(1 -  e^{-2T} \right).\\
      \eta_T&\triangleq&\frac{2}{T}(1-\rho_T).
  \end{eqnarray*}
  Then
  \begin{equation}
      \mathbb E_\Psi \bar X_T - \mathbb E_p X = (\mu_0 - \mu) \rho_T,
  \end{equation}
  and
  \begin{equation}
      \frac{B}{W} = \frac{(\sigma_0^2 - \sigma^2) \rho^2_T
        + \sigma^2\eta_T}
        {(\sigma^2_0 - \sigma^2 + (\mu_0-\mu)^2)(\xi_T - \rho_T^2) +\sigma^2 (1 - \eta_T) }.
  \end{equation}
\end{lemma}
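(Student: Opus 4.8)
The plan is to compute $\mathbb E_\Psi \bar X_T$, $\mathrm{Var}_\Psi \bar X_T$, and $\frac1T\int_0^T \mathbb E[(X_t - \bar X_T)^2]\,\dd t$ explicitly, starting from the Ornstein–Uhlenbeck solution $X_t = e^{-t}X_0 + \mu(1-e^{-t}) + \sqrt{2\sigma}\int_0^t e^{-(t-s)}\dd W_s$ already recorded in \eqref{eq:langevin_solution}. First I would integrate this solution over $t\in[0,T]$ and divide by $T$ to get $\bar X_T$; by Fubini (and the stochastic Fubini theorem for the It\^o term) the time-average of $e^{-t}X_0$ contributes $\rho_T X_0$, the time-average of $\mu(1-e^{-t})$ contributes $\mu(1-\rho_T)$, and the noise term contributes a mean-zero Gaussian. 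Taking $\mathbb E_\Psi$ then gives $\mathbb E_\Psi \bar X_T = \mu_0\rho_T + \mu(1-\rho_T)$, so the bias is $(\mu_0-\mu)\rho_T$ as claimed.

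Next I would compute $B = \mathrm{Var}_\Psi \bar X_T$ by splitting into the variance coming from $X_0$ (which gives $\sigma_0^2\rho_T^2$) plus the variance of the time-averaged noise term, using It\^o isometry: $\mathrm{Var}\big(\frac{\sqrt{2\sigma}}{T}\int_0^T\!\!\int_0^t e^{-(t-s)}\dd W_s\,\dd t\big)$. The cleanest route is to swap the order of integration to write the double integral as $\int_0^T g(s)\,\dd W_s$ with $g(s) = \frac{\sqrt{2\sigma}}{T}\int_s^T e^{-(t-s)}\dd t = \frac{\sqrt{2\sigma}}{T}(1-e^{-(T-s)})$, then integrate $g(s)^2$ over $[0,T]$. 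A short computation should yield the noise variance in the form $\sigma^2\eta_T$ with $\eta_T = \frac2T(1-\rho_T)$, after which $B = \sigma_0^2\rho_T^2 + \sigma^2\eta_T$; rewriting $\sigma_0^2\rho_T^2 = (\sigma_0^2-\sigma^2)\rho_T^2 + \sigma^2\rho_T^2$ and checking $\rho_T^2 + \eta_T$ matches the stationary value will put the numerator in the stated $(\sigma_0^2-\sigma^2)\rho_T^2 + \sigma^2\eta_T$ shape once the $\sigma^2$-pieces are absorbed consistently with the denominator.

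For $W$ I would expand $\mathbb E[(X_t - \bar X_T)^2] = \mathbb E[X_t^2] - 2\mathbb E[X_t\bar X_T] + \mathbb E[\bar X_T^2]$ and average over $t\in[0,T]$. Here $\mathbb E[X_t^2]$ follows from \eqref{eq:langevin_solution} (mean $\mu_0 e^{-t}+\mu(1-e^{-t})$, variance $\sigma_0^2 e^{-2t} + \sigma^2(1-e^{-2t})$), and $\frac1T\int_0^T e^{-2t}\dd t = \xi_T$ is exactly why $\xi_T$ appears; the cross term $\frac1T\int_0^T\mathbb E[X_t\bar X_T]\dd t$ requires the covariance $\mathrm{Cov}(X_t,X_u)$ of the OU process (which splits into an $X_0$-part $\sigma_0^2 e^{-(t+u)}$ and a noise part $\sigma^2(e^{-|t-u|}-e^{-(t+u)})$), double-integrated. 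The bias contributes through $(\mathbb E X_t)(\mathbb E \bar X_T)$ terms, which is where $(\mu_0-\mu)^2$ enters. Collecting the $\sigma_0^2$, $\sigma^2$, and $(\mu_0-\mu)^2$ coefficients and simplifying should give $W = (\sigma_0^2 - \sigma^2 + (\mu_0-\mu)^2)(\xi_T - \rho_T^2) + \sigma^2(1-\eta_T)$, and dividing $B$ by $W$ yields the claimed ratio.

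The main obstacle is the bookkeeping in the $W$ computation: several double integrals of the OU covariance must be evaluated and then the $e^{-t}$, $e^{-2t}$, $e^{-|t-u|}$, and constant terms regrouped so that the three physically meaningful coefficients ($\sigma_0^2-\sigma^2$, $(\mu_0-\mu)^2$, $\sigma^2$) emerge cleanly; a useful sanity check at each stage is the stationary limit $\sigma_0 = \sigma$, $\mu_0 = \mu$ (where $B/W$ should reduce to the ESS$_{(1)}$-type ratio $\eta_T/(1-\eta_T)$) and the $T\to 0$ limit (where $B/W\to 1/M$-free leading behavior forces $\xi_T-\rho_T^2\to 0$ at the right rate). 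I expect no conceptual difficulty beyond this careful accounting, since all integrals are elementary exponentials.
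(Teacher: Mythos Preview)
Your approach is essentially the paper's: compute $\mathbb E_\Psi \bar X_T$, $B=\mathrm{Var}_\Psi\bar X_T$, and $W$ directly from the explicit Ornstein--Uhlenbeck solution, then take the ratio. Two small points. First, your prediction that the noise variance equals $\sigma^2\eta_T$ is off: the It\^o isometry integral $\int_0^T g(s)^2\,\dd s$ actually gives $\sigma^2(\eta_T-\rho_T^2)$, so that $B=\sigma_0^2\rho_T^2+\sigma^2(\eta_T-\rho_T^2)=(\sigma_0^2-\sigma^2)\rho_T^2+\sigma^2\eta_T$ directly, with no need for the ad hoc ``absorbing $\sigma^2$-pieces into the denominator'' step you describe. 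The paper gets this in one stroke by writing $B=\frac{1}{T^2}\iint\mathrm{Cov}(X_s,X_t)\,\dd s\,\dd t$ with $\mathrm{Cov}(X_s,X_t)=(\sigma_0^2-\sigma^2)e^{-(s+t)}+\sigma^2 e^{-|s-t|}$. Second, for $W$ you can skip the cross-term double integral entirely: since $\frac1T\int_0^T X_t\,\dd t=\bar X_T$, one has $\frac1T\int_0^T\mathbb E[X_t\bar X_T]\,\dd t=\mathbb E[\bar X_T^2]$, so $W=\frac1T\int_0^T\mathbb E[X_t^2]\,\dd t-\mathbb E[\bar X_T^2]$, which the paper organizes as $\frac1T\int_0^T\big[\mathrm{Var}(X_t)-\mathrm{Var}(\bar X_T)\big]+\big[(\mathbb E X_t)^2-(\mathbb E\bar X_T)^2\big]\,\dd t$ (the continuous analogue of Lemma~\ref{lemma:Rhat_measure}); this avoids touching $\mathrm{Cov}(X_t,X_u)$ at all in the $W$ computation. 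Your route would still work, just with more bookkeeping. (Minor aside: your $T\to 0$ sanity check mentions $1/M$, but this lemma concerns $\widehat R$, not $\mathfrak n\widehat R$, so there is no $M$ here.)
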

\ \\

\begin{proof}
  Let $\Psi$ denote the stochastic process which generates $X_t$. 
  Once again, we exploit the explicit solution \eqref{eq:langevin_solution} to the Langevin SDE.
  We begin with the numerator.
  \begin{align*}
      B & = 
      \mathrm{Var}_\Psi (\bar X_T) \\ %\hspace{1.5in} \text{(with \ probability \ 1)} \\
      &=\frac{1}{T^2}\int_0^T\int_0^T{\rm Cov}_\Psi (X_s, X_t)\dd s \dd t\\
    &=\frac{1}{T^2}\int_0^T\int_0^T\left((\sigma_0^2-\sigma^2)e^{-(s+t)}+\sigma^2e^{-|s-t|}\right)\dd s\dd t\\
    &=(\sigma_0^2-\sigma^2)\rho^2_T+2\left(\frac{\sigma^2}{T}+(-1+e^{-T})\frac{\sigma^2}{T^2}\right) \\
    &=(\sigma^2_0 - \sigma^2) \rho^2_T + \sigma^2 \eta_T.
  \end{align*}
Next we have
%\begin{equation*}
    %W = \underset{M \to \infty}{\lim} \frac{1}{M} \sum_{m = 1}^M %\frac{1}{T} \int_0^T \left (X_t - \bar X_T \right)^2.
%\end{equation*}
\begin{equation*}
    W =  \frac{1}{T} \int_0^T \mathbb{E}\left[\left (X_t - \bar X_T \right)^2\right].
\end{equation*}
Following the same steps to prove Lemma~\ref{lemma:nW}, we have
\begin{equation*}
     W = \frac{1}{T} \int_0^T \ \left [ \mathrm{Var}_\Psi (X_t) - \mathrm{Var}_\Psi (\bar X_T) \right] + \left[ (\mathbb E_\Psi (X_t))^2 - (\mathbb E_\Psi (\bar X_T))^2 \right] \dd t.
\end{equation*}
Computing each term yields
  \begin{align*}
             \mathbb E_\Psi (X_t)&=\mu+(\mu_0-\mu)e^{-t}\\
             \mathrm{Var}_\Psi(X_t)&=\sigma^2+(\sigma_0^2-\sigma^2)e^{-2t}\\
             \mathbb E_\Psi(\bar X_T)&=\mu+(\mu_0-\mu)\rho_T\\
             \mathrm{Var}_\Psi(\bar X_T)&=\sigma^2\eta_T+(\sigma_0^2-\sigma^2)\rho_T^2
         \end{align*}

Constructing $W$ term by term,
\begin{align*}
    \frac{1}{T} \int_0^T  (\mathrm{Var}_\Psi (X_t)-\mathrm{Var}_\Psi(\bar X_T)) \dd t 
    & = \frac{1}{T} \int_0^T  \sigma^2(1-\eta_T) + (\sigma^2-\sigma_0^2) ( e^{-2t}-\rho_T^2) \dd t\\
     &= \sigma^2(1-\eta_T) +  (\sigma^2_0 - \sigma^2)(\xi_T-\rho_T^2),
\end{align*}
Similarly,
\begin{align*}
        \frac{1}{T} \int_0^T  ((\mathbb E_\Psi (X_t))^2-(\mathbb E_\Psi (\bar X_T))^2) \dd t&= \frac{1}{T} \int_0^T  (\mu_0-\mu) (e^{-t}-\rho_T)(2\mu+(\mu_0-\mu)(e^{-t}+\rho_T)) \dd t\\
    &= (\mu_0-\mu)^2(\xi_T-\rho_T^2).
\end{align*}
Putting it all together, we have
\begin{equation*}
    W = (\sigma^2_0 - \sigma^2 + (\mu_0-\mu)^2)(\xi_T - \rho_T^2) +\sigma^2 (1 - \eta_T) .
\end{equation*}
\end{proof}

\begin{remark}
  Taking the limit at $T \to 0$ yields
  $$
  \underset{T \to 0}{\lim} \ \rho_T = \underset{T \to 0}{\lim} \ \xi_T=\underset{T \to 0}{\lim} \ \eta_T= 1
  $$
  %\begin{itemize}
      %\item 
     % \item $\underset{t \to 0}{\lim} \ \frac{1}{t}(1 - \rho_t) = \frac{1}{2}$
      %\item $\underset{t \to 0}{\lim} \ \xi_t = + \infty$
  %\end{itemize}
  Thus $\underset{T \to 0}{\lim} \ B  = \sigma_0^2$ and $\underset{T \to 0}{\lim} \  W = 0$, therefore $\underset{T \to 0}{\lim} \ B / W = +\infty$.
\end{remark}
The above limits can be calculated by Taylor expanding the exponential.

We now state the main result of this section, which provides a lower bound on $\sigma_0^2$ in order to insure $\widehat R$ is $(\delta, \delta')$-reliable.
Unlike in the $\nR$ case the proof requires some additional assumptions.

\begin{theorem} \label{thm:rhat_reliable}
  If $(\mu - \mu_0)^2 / \sigma^2 \le \delta'$, then $\widehat R$ is always $(\delta, \delta')$-reliable.
  Suppose now that $(\mu - \mu_0)^2 / \sigma^2 > \delta'$.
  Let $T^*$ solve
  \begin{equation*}
      \frac{(\mu - \mu_0)^2 \rho^2_T}{\sigma^2} = \delta',
  \end{equation*}
  for $T$.
  Assume:
  \begin{itemize}
      \item[] (A1) $B / W$ is monotone decreasing (conjecture: this is always true).
      \item[] (A2) $\delta$ verifies the upper bound \begin{equation*} \delta < \frac{1}{\frac{1}{2} T^* \coth \left (\frac{T^*}{2} \right) - 1}
      \end{equation*}
      where $\coth$ is the hyperbolic cotangent.
  \end{itemize}
  Then $\widehat R$ is $(\delta, \delta')$-reliable if and only if
  \begin{equation} \label{eq:rhat_bound}
      \sigma_0^2 \ge \frac{\delta (\xi_{T^*} - \rho^2_{T^*}) (\mu - \mu_0)^2 + [\delta(1 + \rho^2_{T^*} - \eta_{T^*} - \xi_{T^*}) - (\eta_{T^*} - \rho^2_{T^*})] \sigma^2}{(1 + \delta) \rho^2_{T^*} - \delta \xi_{T^*}}.
  \end{equation}
\end{theorem}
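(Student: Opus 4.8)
The plan is to run the argument of Theorem~\ref{thm:langevin} almost verbatim, substituting the closed forms of Lemma~\ref{lemma:langevin-rhat} for the bias and for $B/W$. The trivial case is immediate: if $(\mu-\mu_0)^2/\sigma^2 \le \delta'$, then since $0 < \rho_T \le 1$ the scaled squared bias $(\mu_0-\mu)^2\rho_T^2/\sigma^2$ never exceeds $\delta'$, so $\widehat R$ is vacuously $(\delta,\delta')$-reliable for every $T$.

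Now assume $(\mu-\mu_0)^2/\sigma^2 > \delta'$. First I would record that $\rho_T = (1-e^{-T})/T$ is continuous and strictly decreasing from $1$ to $0$ on $(0,\infty)$ — its derivative has numerator $e^{-T}(T+1)-1$, which is negative for $T>0$ — so the equation defining $T^*$ has a unique positive root, and the scaled squared bias lies below $\delta'$ exactly when $T \ge T^*$. Hence reliability is equivalent to $B/W > \delta$ for all $T < T^*$. Under assumption (A1), that $B/W$ is strictly decreasing in $T$, this is in turn equivalent to $B/W \ge \delta$ at $T = T^*$: if $B/W|_{T^*}\ge\delta$ then monotonicity gives $B/W|_T > \delta$ for $T<T^*$, whereas if $\sigma_0^2$ is small enough that $B/W|_{T^*}<\delta$, continuity yields $B/W<\delta$ for $T$ slightly below $T^*$, where the strictly decreasing scaled squared bias already exceeds $\delta'$, violating reliability. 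This is where the ``only if'' direction comes from.

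It then remains to convert $B/W|_{T^*}\ge\delta$ into the stated lower bound on $\sigma_0^2$. I would substitute the Lemma~\ref{lemma:langevin-rhat} expression evaluated at $T^*$ into $B \ge \delta\,W$ — legitimate since $W>0$, being the average of a nonnegative integrand — expand, and collect the terms linear in $\sigma_0^2$, which produces $\sigma_0^2\big[(1+\delta)\rho_{T^*}^2 - \delta\xi_{T^*}\big] \ge \delta(\xi_{T^*}-\rho_{T^*}^2)(\mu-\mu_0)^2 + \big[\delta(1+\rho_{T^*}^2-\eta_{T^*}-\xi_{T^*}) - (\eta_{T^*}-\rho_{T^*}^2)\big]\sigma^2$; the right-hand side is exactly the numerator of \eqref{eq:rhat_bound}. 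Dividing by $(1+\delta)\rho_{T^*}^2 - \delta\xi_{T^*}$ gives \eqref{eq:rhat_bound}, provided this coefficient is positive — and assumption (A2) is precisely that sign condition, because the identity $\xi_T/\rho_T^2 = \frac{1}{2}T\coth(T/2)$ (obtained from the definitions of $\rho_T,\xi_T$ after factoring $1-e^{-2T}=(1-e^{-T})(1+e^{-T})$ and recognizing $\coth(T/2)=(1+e^{-T})/(1-e^{-T})$) rewrites $(1+\delta)\rho_{T^*}^2 - \delta\xi_{T^*}>0$ as $\delta < \big(\frac{1}{2}T^*\coth(T^*/2)-1\big)^{-1}$.

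The calculations here are routine; the care lies in the monotonicity bookkeeping. Assumption (A1) is taken as given (it is only conjectured), but the proof still needs the elementary monotonicity of $\rho_T$ to locate $T^*$, the positivity of $W$ (which also follows from $\xi_T\ge\rho_T^2$ and $\eta_T\le 1$, each a short sign check on a Taylor remainder), and — the crux — the identification of (A2) with the sign of the leading coefficient, which is what makes the final division, and hence the direction of the inequality in \eqref{eq:rhat_bound}, valid.
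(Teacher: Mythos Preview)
Your proposal is correct and follows essentially the same route as the paper's proof: reduce via (A1) to checking $B/W \ge \delta$ at $T=T^*$, substitute the closed forms from Lemma~\ref{lemma:langevin-rhat}, collect the $\sigma_0^2$ terms, and identify (A2) with positivity of $(1+\delta)\rho_{T^*}^2-\delta\xi_{T^*}$ through the identity $\xi_T/\rho_T^2=\tfrac{1}{2}T\coth(T/2)$. If anything you are slightly more careful than the paper, which does not spell out the monotonicity of $\rho_T$, the positivity of $W$, or the ``only if'' direction.
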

\ \\

\begin{proof}
  When $(\mu - \mu_0)^2 / \sigma^2 \le \delta'$, $(\delta, \delta')$-reliability follows from the fact $\rho_T$ and thence the bias are monotone decreasing.

  Consider now the case where $(\mu - \mu_0)^2 / \sigma^2 > \delta'$.
  To alleviate the notation, assume without loss of generality that $\mu = 0$.
  Per Assumption (A1), it suffices to check that for $T = T^*$, $B / W \ge \delta$.
  Per Lemma~\ref{lemma:langevin-rhat}, this is equivalent to
  \begin{eqnarray*}
  & \frac{(\sigma_0^2 - \sigma^2) \rho^2_{T^*}
        + \sigma^2\eta_{T^*}}
        {(\sigma^2_0 - \sigma^2 + \mu_0^2)(\xi_{T^*} - \rho_{T^*}^2) +\sigma^2 (1 - \eta_{T^*}) } & \ge \delta \\[0.25in]
  \iff & \frac{\sigma_0^2 \rho^2_{T^*} + (\eta_{T^*} - \rho_{T^*}^2) \sigma^2}{\sigma^2_0(\xi_{T^*} - \rho_{T^*}^2) + (\mu_0^2 - \sigma^2) (\xi_{T^*} - \rho_{T^*}^2) +
  \sigma^2 (1 - \eta_{T^*})} & \ge \delta \\[0.25in]
%
%   \iff & \sigma_0^2 \rho^2_{T^*} + (\eta_{T^*} - \rho_{T^*}^2) \sigma^2 & \ge \delta \left [ 
%   \sigma^2_0(\xi_{T^*} - \rho_{T^*}^2) + ((\mu_0 - \mu)^2 - \sigma^2) (\xi_{T^*} - \rho_{T^*}^2)
%   \right] \\
%
  \iff & \sigma^2_0 (\rho^2_{T^*} + \delta (\rho_{T^*}^2 - \xi_{T^*})) 
  & \ge \delta \left[ (\mu_0^2 - \sigma^2) (\xi_{T^*} - \rho_{T^*}^2) \right] + \delta (1 - \eta_{T^*}) \sigma^2 \\
  & & \hspace{0.5in} - (\eta_{T^*} - \rho_{T^*}^2) \sigma^2  \\[0.25in]
  \iff & \sigma^2_0 [(1 + \delta) \rho^2_{T^*} - \delta \xi_{T^*}] 
  & \ge \delta \mu_0^2 (\xi_{T^*} - \rho_{T^*}^2) \\ 
  & & \hspace{0.15in} + [\delta(1 + \rho_{T^*}^2 - \eta_{T^*}- \xi_{T^*}) - (\eta_{T^*} - \rho_{T^*}^2)] \sigma^2.
  \end{eqnarray*}
To complete the proof, we need to show that $\left ( (1 + \delta) \rho^2_{T^*} - \delta \xi_{T^*} \right)$ is positive.
This will not always be true, hence the requirement for Assumption (A2).
We arrive at this condition by expressing $\xi_T$ in terms of $\rho^2_T$.
\begin{eqnarray*}
  \xi_T & = & \frac{\xi_T}{\rho_T} \rho_T \\
  & = & \frac{1}{2} \left ( \frac{1 - e^{- 2 T}}{1 - e^{-T}} \right) \rho_T \\
  & = & \frac{1}{2} \left (\frac{1 - e^{-T} + e^{-T} - e^{-2T}}{1 - e^{-T}} \right) \rho_T \\
  & = & \frac{1}{2} \left (1 + e^{-T} \frac{1 - e^{-t}}{1 - e^{-T}} \right) \rho_T \\
  & = & \frac{1}{2} (1 + e^{-T}) \rho_T  \\
  & = & \frac{1}{2} \left(\frac{1 + e^{-T}}{\rho_T}\right) \rho_T^2 \\
  & = & \frac{1}{2} \left(T \frac{1 + e^{-T}}{1 - e^{-T}}\right) \rho^2_T \\
  & = & \frac{1}{2} T \coth(T / 2) \rho^2_T.
\end{eqnarray*}

Thus
\begin{eqnarray*}
   (1 + \delta) \rho^2_{T^*} - \delta \xi_{T^*} & = & \rho^2_{T^*} \left [1 + \delta - \frac{\delta}{2} T \coth(T / 2) \right],
\end{eqnarray*}
which by assumption (A2) is positive.
\end{proof}

\begin{remark}
On the right side of \eqref{eq:rhat_bound}, all terms in parenthesis in the numerator are positive, meaning the numerator comprises a positive term scaled by $\delta$,
\begin{equation*}
    \delta \left [(\xi_{T^*} - \rho^2_{T^*}) (\mu - \mu_0)^2 + (1 + \rho^2_{T^*} - \eta_{T^*} - \xi_{T^*}) \sigma^2 \right],
\end{equation*}
and a negative term,
\begin{equation*}
    - (\eta_{T^*} - \rho_{T^*}^2) \sigma^2.
\end{equation*}
This second term appears in the expression for $B$ (Lemma~\ref{lemma:langevin-rhat}), which we can rewrite as
\begin{equation*}
    B = \sigma^2_0 \rho^2_T + \sigma^2 (\eta_T - \rho^2_T) \ge \sigma^2 (\eta_T - \rho^2_T).
\end{equation*}
This lower bound does not cancel with $W$, ensuring that $B / W$ is nonzero.
Hence for 
\begin{equation*}
    \delta \le \frac{(\eta_{T^*} - \rho^2_{T^*}) \sigma^2}{(\xi_{T^*} - \rho^2_{T^*}) (\mu - \mu_0)^2 + (1 + \rho^2_{T^*} - \eta_{T^*} - \xi_{T^*}) \sigma^2}
\end{equation*}
the reliability condition is  always met, including even when $\sigma_0^2 = 0$.
This is comparable to the $\delta < 1 / M$ case in Theorem~\ref{thm:langevin}.
\end{remark}

We can understand Assumption (A2) as a requirement that $\delta$ be not too large compared to $\delta'$, since a smaller $\delta'$ implies a larger $T^*$. \textit{Why} such a requirement exists remains conceptually unclear.
We simulate the upper bound for $\delta' \in (0, 1)$, $\sigma = 1$ and $|\mu_0 - \mu| \in \{1, 2, \cdots, 5\}$ (Figure~\ref{fig:delta_bound}).
In the studied cases, the upper bound on $\delta$ is always at least $\sim$3 times $\delta'$ and potentially orders of magnitude larger.

\begin{figure}
    \centering
    \includegraphics[width=5in]{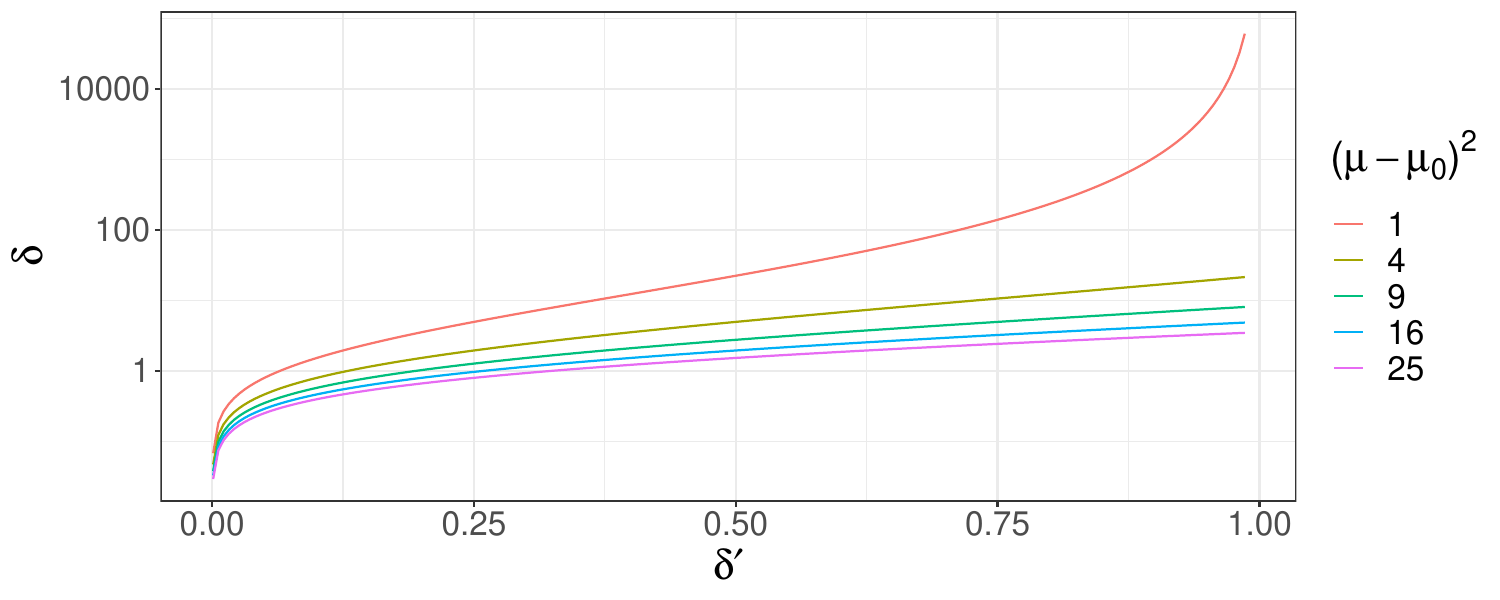}
    \caption{\em Upper bound on $\delta$ to verify Assumption (A2) in Theorem~\ref{thm:rhat_reliable}}
    \label{fig:delta_bound}
\end{figure}

%\section{Additional experiments on the variance of $\nB$} \label{app:vaiance}
%
%In Section~\ref{sec:variance}, we empirically studied how the choice of $K$, given a fixed total number of chains $KM$, affects the variance of $\nR$, when targeting a Gaussian distribution.
%We also examined the impact of increasing the length of the sampling phase $N$.
%
%Next, we consider increasing the total number of chains $KM$ (Figure~\ref{fig:variance_M2}).
%While this reduces the variance we find that it is not nearly as effective a strategy as decreasing $K$ relative to $M$.
%
%\begin{figure}
%  \begin{center}
%  \includegraphics[width=4.5in]{}
%  \caption{\em Variance of $\nB / \nW$ for varying number of chains, split into $K = KM / 2$ superchains, each with $N = 1$ draw.
%  Adding more chains reduces the variance, but even after substantially increasing the number of chains, the stationary variance of $\nB / \nW$ remains larger than when using $K = 2$ for $KM = 128$ chains.}
%  \label{fig:variance_M2}
%  \end{center}
%\end{figure}

\section{Description of models} \label{app:models}

We provide details on the target distributions used in Section~\ref{sec:experiments}.
For the Rosenbrock, German Credit, Eight Schools, and Item Response models, precise estimates of the mean and variance along each dimension can be found in the Inference Gym \citep{inferencegym:2020}.
For the Pharmacokinetics example, we compute benchmark means and variances using 2048 chains, each with 1000 warmup and 1000 sampling iterations.
We run MCMC using TensorFlow Probability's implementation of ChEES-HMC \citep{Hoffman:2021}.
The resulting effective sample size is between 60,000 and 100,000 depending on the parameters.
For the bimodal target, the correct mean and variance are worked out analytically.

\subsection{Rosenbrock (Dimension = 2)}

The Rosenbrock distribution is a nonlinearly transformed normal distribution with highly non-convex level sets; see Equation (\ref{eq:rosenbrock}).

\subsection{German Credit (Dimension = 25)}

``German Credit'' is a Bayesian logistic regression model applied to a dataset from a machine learning repository \citep{Dua:2017}.
There are 24 features and an intercept term.
The joint distribution over $(\theta, y)$  is
\begin{eqnarray*}
    \theta & \sim & \text{normal}(0, I), \\
    y_n & \sim & \text{Bernoulli} ( \mbox{logit}^{-1}(\theta^T x_n)),
\end{eqnarray*}
where $I \in \mathbb R^{24 \times 24}$ is the identity matrix.
Our goal is to sample from the posterior distribution $p(\theta \mid y)$.

\subsection{Eight Schools (Dimension = 10)} 
``Eight Schools'' is a Bayesian hierarchical model describing the effect of a program to train students to perform better on a standardized test, as measured by performance across 8 schools \citep{Rubin:1981}.
We estimate the group mean and the population mean and variance.
To avoid a funnel shaped posterior density, we use a non-centered parameterization:
\begin{eqnarray*}
    \mu & \sim &\text{normal}(5, 3) \\ 
    \sigma &\sim & \text{normal}^+(0, 10) \\
    \eta_n & \sim & \text{normal}(0, 1) \\
    \theta_n & = & \mu + \eta_n \sigma \\
    y_n & \sim & \text{normal}(\theta_n, \sigma_n),
\end{eqnarray*}
with the posterior distribution taken over $\mu$ and $\eta$.

\subsection{Pharmacokinetic (Dimension = 45)} 

``Pharmacokinetics'' is a one-compartment model with first-order absorption from the gut that describes the diffusion of a drug compound inside a patient's body.
Oral administration of a bolus drug dose induces a discrete change in the drug mass inside the patient's gut.
The drug is then absorbed into the \textit{central compartment}, which represents the blood and organs into which the drug diffuses profusely.
This diffusion process is described by the system of differential equations:
\begin{eqnarray*}
  \frac{\mathrm d m_\text{gut}}{\mathrm d t} & = & - k_1 m_\text{gut} \nonumber  \\
  \frac{\mathrm d m_\text{cent}}{\mathrm d t} & = & k_1 m_\text{gut} - k_2 m_\text{cent},
\end{eqnarray*}
which admits the analytical solution, when $k_1 \neq k_2$,
\begin{eqnarray*}
  m_\text{gut}(t) & = & m^0_\text{gut} \exp(- k_1 t) \nonumber \\
  m_\text{cent}(t) & = & \frac{\exp(- k_2 t)}{k_1 - k_2} \left (m^0_\text{gut} k_1 (1 - \exp[(k_2 - k_1) t] + (k_1 - k_2) m^0_\text{cent}) \right).
\end{eqnarray*}
Here $m^0_\text{gut}$ and $m^0_\text{cent}$ are the initial conditions at time $t = 0$.

A patient typically receives multiple doses.
To model this, we solve the differential equations between dosing events, and then update the drug mass in each compartment, essentially resetting the boundary conditions before we resume solving the differential equations.
In our example, this means adding $m_\text{dose}$, the drug mass administered by each dose, to $m_\text{gut}(t)$ at the time of the dosing event.
We label the dosing schedule as $x$.

Each patient receives a total of 3 doses, taken every 12 hours.
Measurements are taken at times \mbox{$t = (0.083, 0.167, 0.25, 0.5, 0.75, 1, 1.5, 2, 3, 4, 6, 8)$} hours after each dosing event. 

We simulate data for 20 patients and for each patient, indexed by $n$, we estimate the coefficients $(k_1^n, k_2^n)$.
We use a hierarchical prior to pool information between patients and estimate the population parameters $(k_1^\text{pop}, k_2^\text{pop})$ with a non-centered parameterization.
The full Bayesian model is:
\begin{eqnarray*}
  \text{\textit{hyperpriors:}} \\
  k_1^\text{pop} & \sim & \text{lognormal}(\log 1, 0.1) \\
  k_2^\text{pop} & \sim & \text{lognormal}(\log 0.3, 0.1) \\
  \sigma_1 & \sim & \text{lognormal}(\log 0.15, 0.1) \\
  \sigma_2 & \sim & \text{lognormal}(\log 0.35, 0.1) \\
  \sigma & \sim & \text{lognormal}(-1, 1) \\
  \text{\textit{hierarchical \ priors:}} \\
  \eta_1^n & \sim & \text{normal}(0, 1) \\
  \eta_2^n & \sim & \text{normal}(0, 1) \\
  k_1^n & = & k_1^\text{pop} \exp(\eta_1^n \sigma_1) \\
  k_2^n & = & k_2^\text{pop} \exp(\eta_2^n \sigma_2) \\
  \text{\textit{likelihood:}} \\
  y_n & \sim & \text{lognormal}(\log m_\text{cent} (t, k_1^n, k_2^n, x), \sigma).
\end{eqnarray*}
We fit the model on the unconstrained scale, meaning the Markov chains explore the parameter space of, for example, $\log k_1^\text{pop} \in \mathbb R$, rather than $k_1^\text{pop} \in \mathbb R^+$.

\subsection{Mixture of Gaussians (Dimension = 100)}

A mixture of two well-separated 100-dimensional normals,
\begin{equation*}
    \theta \sim 0.3 \ \text{MVN}(-\mu, I) + 0.7 \ \text{MVN}(\mu, I),
\end{equation*}
where $\mu$ is the 100-dimensional vector of 5's.

\subsection{Item Response Theory (Dimension = 501)}

The posterior of a model to estimate students abilities when taking an exam.
There are $J = 400$ students and $L = 100$ questions.
The model parameters are the mean student ability $\delta \in \mathbb R$, the ability of each individual student $\boldsymbol \alpha \in \mathbb R^J$ and the difficulty of each question $\boldsymbol \beta \in \mathbb R^L$.
The observations are the binary matrix $Y \in \mathbb R^{J \times L}$, with $y_{j \ell}$ the response of student $j$ to question $\ell$.
The joint distribution is
\begin{eqnarray*}
  \delta & \sim & \text{normal}(0.75, 1) \\
  \boldsymbol \alpha & \sim & \text{MVN}(0, I) \\
  \boldsymbol \beta & \sim & \text{MVN}(0, I) \\
  y_{j \ell} & \sim & \text{Bernoulli} (\text{logit}^{-1} (\alpha_j - \beta_\ell + \delta)).
\end{eqnarray*}

\bibliographystyle{ba}
\bibliography{ref.bib}% place <bib-data-file> in ./bib folder 

\end{document}